\numberwithin{equation}{section}
\theoremstyle{plain}
\newtheorem{theorem}{Theorem}[section]
\newtheorem{corollary}{Corollary}[theorem]
\newtheorem{lemma}[theorem]{Lemma}
\newtheorem{proposition}[theorem]{Proposition}
\newtheorem{conjecture}[theorem]{Conjecture}
\begin{document}

\begin{frontmatter}

\title{Breaking hypothesis testing for failure rates\thanksref{T1}}

\runtitle{Breaking hypothesis testing for failure rates\thanksref{T1}}
\thankstext{T1}{Taking the UMP test for Poisson processes and applying it to others.}

\begin{aug}
\author{\fnms{Rohit} \snm{Pandey}\thanksref{t1,t2}\ead[label=e1]{ropandey@microsoft.com}\ead[label=e2]{rohitpandey576@gmail.com}},
\author{\fnms{Yingnong} \snm{Dang}},
\author{\fnms{Gil} \snm{Shafriri}},
\author{\fnms{Murali} \snm{Chintalapati}}
\and
\author{\fnms{Aerin} \snm{Kim}}

\thankstext{t1}{ropandey@microsoft.com}
\thankstext{t2}{rohitpandey576@gmail.com}

\emph{\{ropandey,yidang,gilsh,muralic,ahkim\}@microsoft.com}

\affiliation{Microsoft, Redmond}

\address{3050 152nd Ave NE\\
Redmond, WA 98052\\
\printead{e1}\\
\phantom{E-mail:rohitpandey576@gmail.com\ }\printead*{e2}}
%
\end{aug}

\begin{abstract}
We describe the utility of point processes and failure rates and the most common point process for modeling failure rates, the Poisson point process. Next, we describe the uniformly most powerful test for comparing the rates of two Poisson point processes for a one-sided test (henceforth referred to as the ``rate test''). A common argument against using this test is that real world data rarely follows the Poisson point process. We thus investigate what happens when the distributional assumptions of tests like these are violated and the test still applied. We find a non-pathological example (using the rate test on a Compound Poisson distribution with Binomial compounding) where violating the distributional assumptions of the rate test make it perform better (lower error rates). We also find that if we replace the distribution of the test statistic under the null hypothesis with any other arbitrary distribution, the performance of the test (described in terms of the false negative rate to false positive rate trade-off) remains exactly the same. Next, we compare the performance of the rate test to a version of the Wald test customized to the Negative Binomial point process and find it to perform very similarly while being much more general and versatile. Finally, we discuss the applications to Microsoft Azure. The code for all experiments performed is open source and linked in the introduction.
\end{abstract}

%

\end{frontmatter}

\newpage
\tableofcontents{}

\newpage
\section*{Introduction}
Stochastic point processes are useful tools used to model point in time events (like earthquakes, supernova explosions, machine or organ failure, etc.). Hence, they are ubiquitous across industries as varied as cloud computing, health care, climatology, etc. Two of the core properties of point processes are the rates of event arrival (how many events per unit time) and the inter-arrival time between successive events (for example, how long is a machine expected to run before it fails).

At Microsoft Azure, we have realized that machine failures are most conveniently described by point processes and have framed our KPIs (Key Performance Indicators, numbers that serve as a common language across the organization to gauge performance) around failure rates for them. Hence, we dedicate section-I to event rates for point processes. The simplest point process for modeling these random variables, and the only one that has a constant failure rate is the Poisson point process. Hence, that process will act as our base. 

Now, it is very important for us at Azure to be able to perform statistical inference on these rates (given our core KPI is devised around them) using for example, hypothesis testing. When a new feature is deployed, we want to be able to say if the failure rate is significantly worse in the treatment group that received it vis-a-vis a control group that didn't. Another field where hypothesis testing on failure rates is an active area of research is medicine (see for example, \cite{zhu}). Hence, we describe the ``uniformly most powerful test'' for comparing failure rates in section II and study its properties. In doing so, we reach some very interesting conclusions. 

In hypothesis testing, we always assume some distributions for the two groups we want to compare. A common theme across the body of research on hypothesis testing appears to be a resistance to violating this expectation too much (for example, the authors in \cite{zhu} refer to the false positive rate getting inflated when the distributional assumptions are invalidated and recommend not using the test in those scenarios). However, as we know, all models are wrong. This applies to any distributional assumption we pick to model our data - we can bet on the real data diverging from these assumptions to varying degrees. 

We therefore put our hypothesis test to the test by conducting some experiments where we willfully violate the distributional assumptions of our test (use a negative binomial point process instead of Poisson for example even though the test is devised with a Poisson assumption in mind) and study the consequences. 
We find some scenarios (non pathological) where it turns out that violating the underlying distributional assumptions of the test to a larger extent actually makes it better (where ``better'' is defined as having a better false negative to false positive rate trade off). This is covered in section III-B. Hence, we challenge this notion that violating the distributional assumptions of the test is necessarily a bad thing to be avoided.

We also reach an interesting conclusion that if we swap out the distribution of the null hypothesis with any other distribution under the sun, the trade off between the false negative rate and false positive rate remains unchanged. This conclusion holds not just for the rate test, but any one sided test. For example, if we take the famous two sample t-test for comparing means and replace the t-distribution with (for example) some weird multi-modal distribution, the false negative to false positive rate trade off will remain unchanged. These experiments are covered in section III.

Next, we measure the performance of our test, designed for the Poisson point process on a negative binomial point process and compare it to the state of the art hypothesis test designed for negative binomial point processes and find it fairs quite well. These comparisons are covered in section IV. Finally, in section V we cover the applications to Microsoft Azure and business impact of this work. All the code is open sourced \href{https://github.com/ryu577/stochproc}{and available on Github}. For example, see \href{https://github.com/ryu577/stochproc/blob/master/plots/hypoth_tst_failure_rates/plots.py}{here} for all plots you'll find in this paper and \href{https://github.com/ryu577/stochproc/blob/master/tests/hypothesis_testing_failure_rate.py}{here} for relevant tests on the library.

\section{Failure rates and the Poisson process}
Over the years, the core KPI used to track availability within Azure has shifted and evolved. For a long time, it was the total duration of customer VM (Virtual machine - the unit leased to Azure customers) downtime across the fleet. However, there were two issues with using this as a KPI:

\begin{itemize}
\item{It wasn't normalized, meaning that if we compare it across two groups with the first one having more activity, we can obviously expect more downtime duration as well.}
\item{It wasn't always aligned with customer experience. For example, a process causing many reboots each with a short duration wouldn't move the overall downtime duration by much and hence not get prioritized for fixing. However, it would still degrade customer experience especially when their workloads were sensitive to any interruptions. Customers running gaming workloads for example tend to fall into this category.}
\item{It is much harder for our telemetry to accurately capture how long a VM was down for as opposed to simply stating that there was an interruption in service around some time frame.}
\end{itemize}

The logical thing to do would be to define the KPI in terms of interruptions and that would at least address the second and third problems. However, the issue remained that it wasn't normalized. For example, as the size of the Azure fleet grows over time, we expect the number of interruptions across the fleet to increase as well. But then, if we see the number of fleet-wide interruptions increasing over time, how do we tell how much of it is due to the size increasing and how much can be attributed to the platform potentially regressing?

To address these problems, a new KPI called the `Annual Interruption Rate' or AIR was devised, which is basically a normalized form of interruptions. Before describing it, let's take a highly relevant detour into the concept of ``hazard rate". It can be interpreted as the instantaneous rate at which events from a point process are occurring, much like velocity is the instantaneous rate at which something is covering distance.

This rate can be expressed in terms of properties of the distribution representing times elapsing between the events of interest which in this case might be VM reboots. Since this time between two successive reboots is a random variable, we will express it as an upper-case, $T$. Since this notion of rates applies to any events, that is how we will refer to these `reboots'. If we denote the probability density function (PDF) of this random variable, $T$ by $f_T$ and the survival function (probability that the random variable, $T$ will exceed some value, $t$) by $S_T(t) = P(T>t)$, then the hazard rate is given by:

\begin{equation}\label{haz_rate_def}h_T(t) = \frac{f_T(t)}{S_T(t)} \end{equation}

The way to interpret this quantity is that at any time $t$, the expected number of events the process will generate in the next small interval, $\delta t$ will be given by: $h_T(t) \delta t$. You can find a derivation of this expression in appendix A. Note again that this is an instantaneous rate, meaning it is a function of time. When we talk about the Azure KPI, we're not looking to estimate a function of time. Instead, given some interval of time (like the last week) and some collection of VMs, we want to get a single number encapsulating the overall experience. In reality, the rate will indeed probably vary from instant to instant within our time interval of interest. So, we want one estimate to represent this entire profile. 

It is helpful again to draw from our analogy with velocity. If a car were moving on a straight road with a velocity that is a function of time and we wanted to find a single number to represent its average speed, what would we do? We would take the total distance traveled and divide by the total time taken for the trip. Similarly, the average rate (let's denote it by $\lambda$) over a period of time will become the number of events we are modeling divided by the total observation time interval (say $t$). 

\begin{equation}\lambda = \frac{n}{t}\label{rate_def}\end{equation}

Just as it is possible to drive a car with a steady, constant velocity, making the average and instantaneous rates the same, it is also possible to have a process where the instantaneous rate is always a constant, $\lambda$ and this is what the average rate as well will become. This special point process is called the Poisson point process (the only process with this constant rate property - henceforth denoted by $PP(\lambda)$). Chapter 5 of \cite{ross} covers this extensively. As soon as we say ``give me a single rate defining the interruptions per unit time for this data'', we're essentially asking to fit the data as closely as possible to a Poisson point process and get the $\lambda$ parameter for that process.
In section 5.3.2 of \cite{ross}, Ross mentions that reason for the name of the Poisson point process. Namely, that the number of events in any interval, $t$, $N(t)$ is distributed according to a Poisson distribution with mean $\lambda t$. The probability mass function (PMF) is defined there:

\begin{equation}\label{poisson_pmf}
P(N(t)=n) = \frac{e^{-\lambda t}(\lambda t)^n}{n!}
\end{equation}

Also, the inter-arrival times of events, $T$ follows an exponential distribution (density function $f_T(t)=\lambda e^{-\lambda t}$). This makes sense since it is the \textit{only} distribution that has a constant hazard rate with time (which is its parameter, $\lambda$). This is called the `memory-less' property (the process maintains no memory - the rate remains the same regardless of what data we observed from the distribution). We now show that equation \ref{rate_def} is consistent with the Poisson process.

\begin{proposition}
If we see $n_1, n_2 \dots n_k$ point events in observation periods, $t_1, t_2, \dots t_k$ from some data, the value of the rate parameter, $\lambda$ of the Poisson process that maximizes the likelihood of seeing this data is given by:
$$\lambda = \frac{n}{t}$$
Where $n=\sum\limits_{i=1}^k n_i$ is the total interruptions observed and $t=\sum\limits_{i=1}^k t_i$ is the total time period of observation.
\end{proposition}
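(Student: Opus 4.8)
The plan is to recognize this as a standard maximum likelihood estimation problem and carry out the optimization directly. First I would write down the likelihood of the observed data. By the defining Poisson-process property recorded in equation (\ref{poisson_pmf}), each count $n_i$ accumulated over a period of length $t_i$ is distributed as a Poisson random variable with mean $\lambda t_i$. Since the counts over the separate (disjoint) observation periods are independent — another defining feature of the Poisson process — the joint likelihood factors as a product of the individual probability mass functions:
$$L(\lambda) = \prod_{i=1}^k \frac{e^{-\lambda t_i}(\lambda t_i)^{n_i}}{n_i!}.$$

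Next I would pass to the log-likelihood, since the logarithm is monotone (so it preserves the maximizer) and converts the product into a sum that is far easier to differentiate. Expanding and collecting the terms that actually depend on $\lambda$, while absorbing the remaining $\lambda$-free pieces (the $\log t_i$ and $\log n_i!$ contributions) into a constant $C$, gives
$$\ell(\lambda) = -\lambda \, t + n \log \lambda + C,$$
where $t = \sum_{i=1}^k t_i$ and $n = \sum_{i=1}^k n_i$. This is the crucial simplification: the entire dependence on the data collapses onto the two sufficient statistics $n$ and $t$, which is precisely why the answer can take the clean form $n/t$.

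Then I would differentiate and solve. Setting $\ell'(\lambda) = -t + n/\lambda = 0$ yields immediately $\lambda = n/t$, matching equation (\ref{rate_def}). To confirm that this stationary point is genuinely the global maximizer rather than a minimum or an inflection, I would examine the second derivative $\ell''(\lambda) = -n/\lambda^2$, which is strictly negative on $(0,\infty)$ whenever $n > 0$; hence $\ell$ is strictly concave and the critical point is its unique maximizer.

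The computation is routine, so I do not anticipate a substantive obstacle; the only points demanding care are bookkeeping ones. The first is justifying that the per-period probabilities may be multiplied, which rests on the independence of the Poisson process over disjoint intervals and should be stated explicitly rather than assumed silently. The second is the degenerate case $n = 0$, where the log-likelihood is monotonically decreasing in $\lambda$ and the maximizer degenerates to the boundary value $\lambda = 0$, which remains consistent with the claimed formula $n/t = 0$.
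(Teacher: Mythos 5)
Your proposal is correct and follows essentially the same route as the paper's proof: form the product likelihood from the Poisson PMF, pass to the log-likelihood, differentiate, and solve to obtain $\lambda = n/t$. The only additions are your explicit second-derivative concavity check and the $n=0$ boundary remark, which the paper omits but which only strengthen the argument.
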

\begin{proof}
Per equation \ref{poisson_pmf}, the likelihood of seeing the $i$th observation becomes:

$$L_i(\lambda) =  \frac{e^{-\lambda t_i}(\lambda t_i)^n_i}{n_i !}$$

Which makes the likelihood across all the data points:

$$L(\lambda) =  \prod\limits_{i=1}^k\frac{e^{-\lambda t_i}(\lambda t_i)^{n_i}}{n_i !}$$
Taking logarithm on both sides we get the log-likelihood function,
$$ll(\lambda) =  \sum\limits_{i=1}^k -\lambda t_i + n_i \log(\lambda t_i) -\log(n_i !)$$

To find the $\lambda$ that maximizes this likelihood, we take derivative with respect to it and set to $0$.

$$\frac{\partial ll(\lambda)}{\partial \lambda} = \sum\limits_{i=1}^k -t_i +n_i \sum\limits_{i=1}^k \frac{t_i}{\lambda t_i}=0$$
Solving for $\lambda$ we get equation \ref{rate_def} as expected when $n$ is defined as the total events and $t$ is defined as the total observation period.
\end{proof}

We can also use the fact that the inter-arrival times, $T$ are exponential to reach the same conclusion and this alternate derivation is covered in appendix B. Note that the estimator for the average rate, $\lambda$ obtained here will hold for any point process, not just $PP(\lambda)$.

\begin{proposition}
Our estimator for the rate, $\lambda$ described in equation \ref{rate_def} for a Poisson point process is unbiased and asymptotically consistent.
\end{proposition}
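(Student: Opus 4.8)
The plan is to treat the observation windows $t_1, \dots, t_k$ as fixed, deterministic quantities (they are the periods we chose to observe) and the counts $n_1, \dots, n_k$ as the only source of randomness, with $n_i \sim \text{Poisson}(\lambda t_i)$ by the defining property of $PP(\lambda)$ recorded in equation \ref{poisson_pmf}. The estimator $\hat\lambda = n/t = \sum_i n_i / \sum_i t_i$ is then a random numerator over a constant denominator, which keeps every calculation elementary and sidesteps any ratio-of-random-variables complications.

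First I would establish unbiasedness. Since $t = \sum_i t_i$ is a constant, linearity of expectation gives $E[\hat\lambda] = E[\sum_i n_i]/t = \sum_i E[n_i]/t = \lambda \sum_i t_i / t = \lambda$, using $E[n_i] = \lambda t_i$ for the Poisson. This step is immediate and requires no asymptotics. Next, for consistency, I would exploit the fact that a sum of independent Poisson variables is again Poisson: by the independent-increments property of the Poisson process, $n = \sum_i n_i \sim \text{Poisson}(\lambda t)$. Hence $\mathrm{Var}(\hat\lambda) = \mathrm{Var}(n)/t^2 = \lambda t / t^2 = \lambda/t$. As the total observation time $t \to \infty$ (whether by letting $k \to \infty$ or by lengthening the individual windows), this variance vanishes while the bias is already zero, so $\hat\lambda$ converges to $\lambda$ in mean square; invoking Chebyshev's inequality upgrades this to convergence in probability, which is precisely asymptotic consistency.

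The main obstacle is conceptual rather than computational: pinning down the asymptotic regime. The phrase ``asymptotically consistent'' only acquires meaning once we specify what is growing, and the natural choice here is the total observation time $t \to \infty$ rather than the number of windows $k$ per se. I would also take care to justify the independence of the counts across windows, which is what lets us pool them into a single $\text{Poisson}(\lambda t)$ and makes the variance collapse to $\lambda/t$; this independence is exactly the independent-increments structure of $PP(\lambda)$, so it should be stated explicitly as the hypothesis doing the real work.
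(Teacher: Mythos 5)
Your proof is correct and follows essentially the same route as the paper's: compute $E[\hat\lambda]=\lambda$ by linearity and $\mathrm{Var}(\hat\lambda)=\lambda/t$ from the Poisson count, then let the total observation time $t\to\infty$. The only additions are cosmetic strengthenings the paper omits --- explicitly pooling the $k$ windows via independent increments and invoking Chebyshev to convert vanishing variance into convergence in probability --- which tighten rather than change the argument.
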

\begin{proof}
Let's say we observe the process for a certain amount of time, $t$. The unbiased estimator of $\lambda$ will become:

$$\hat{\lambda} = \frac{N(t)}{t}$$

The expected value of this estimator is: $E(\hat{\lambda}) = \frac{E(N(t))}{t} = \lambda$ meaning it is unbiased.

And the variance of this estimator will be:

$$V(\hat{\lambda}) = \frac{V(N(t))}{t^2} = \frac{\lambda}{t}$$ 

For a large time frame of observation, the variance in this estimator will go to $0$, making it asymptotically consistent.
\end{proof}

The `average rate' defined here is what the `AIR' (Annual Interruption Rate) KPI used within Azure is based on. It is the projected number of reboots/ other events (like blips and pauses, etc.) a customer will experience if they rent 100 VMs and run them for a year (or rent one VM and run it for 100 years; what matters is the VM-years). So, in equation \ref{rate_def}, if we measure the number of interruptions and VM-years for any scope (ex: entire Azure, a customer within Azure, a certain hardware, etc.) we get the corresponding average rate.

This definition in equation \ref{rate_def} is almost there, but is missing one subtlety related to VMs in Azure (or any cloud environment) going down for certain intervals of time as opposed to being point-events. This means that the VM might be up and running for an interval of time and then go down and stay down for some other interval before switching back to up and so on. The way to address this is to use in the denominator, the total intervals of machine UP-time only (discounting the time the machines stay down). This way, we get a failure rate per unit time the machines are actually running, which is far more useful as a KPI. In practice, this doesn't matter too much since the total time the machines spent being down is negligible compared to the time they spend being up (else we wouldn't have a business).

\section{Hypothesis testing: closed form expressions for false positive-negative trade off}
There are many questions that can be answered within the framework of hypothesis testing (see chapter 1 of \cite{lehman}). For example, we could answer the question: are the rates from two processes ``different" in a meaningful way. This is called a two-sided test. Here, we will stay focused on answering if a treatment group (group-1) has a higher failure rate than a control group (group-0). This is called a one-sided test. This question is particularly relevant in cloud environments like Azure where new software features are getting constantly deployed and we're interested in answering if a particular deployment caused the failure rate to regress. We will reference these two groups throughout this document.

A detailed description of hypothesis testing is beyond the scope of what we're discussing here. For a comprehensive treatment, refer to \cite{lehman} and \href{https://towardsdatascience.com/hypothesis-testing-visualized-6f30b18fc78f?source=friends_link&sk=cd38bd44d242bb143cc184d6c2e6f0c1}{the blog linked here} for an intuitive, visual introduction. Instead, let's simply define some terms that will be used throughout this document (they will be re-introduced with context as the need arises in the proceeding text; this is just meant as a sort of index of terms). Some of them pertain to hypothesis testing and can be looked up in the references above or in a multitude of other sources online that cover the topic. 

\begin{description}
\item[$N_0$] The number of failure events observed in data collected from group-0, the control group. We will consider multiple distributions for this variable in the proceeding discussion.
\item[$t_0$] The total observation time for group-0, the control group.
\item[$N_1$] The number of failure events observed in data collected from group-1, the treatment group. Again, multiple distributional assumptions will be considered.
\item[$t_1$] The total observation time for group-1, the treatment group.
\item[$\lambda$] The underlying failure rate of the control group. Per equation \ref{rate_def}, the unbiased estimator for this rate is: $\frac{N_0}{t_0}$
\item[$\delta \lambda$] The effect size. If we imagine that the treatment group has a worse failure rate, this is the amount by which we assume it to be worse. It is closely related the the alternate hypothesis, $H_a$ defined below.
\item[$X$] The test statistic. We take the data from the two groups and convert it to a single number. We can then observe this number from our collected data and if it's high (or low) enough, conclude a regression was caused. For example, it could be the difference in estimated rates.
\item[$H_0$] The null hypothesis of the test. We always start with the assumption of innocence and this represents the hypothesis that the treatment group does not have a worse failure rate than the control group. Further, the distributional assumptions on $N_0$ and $N_1$ made by the test are satisfied. For this paper, this will mostly mean that $N_0$ and $N_1$ are both Poisson processes, $PP(\lambda)$.
\item[$H_a$] The alternate hypothesis. In this hypothesis, we assume that the treatment group indeed has a worse failure rate than the control group. To make it concrete, we assume it's worse by the effect size, $\delta \lambda$. Like $H_0$, the distributional assumptions made by the test are assumed satisfied. This will mean for the most part that the control group follows $PP(\lambda)$ and the treatment group follows $PP(\lambda+\delta \lambda)$.
\item[$H_0'$] This is a new hypothesis we're defining. It is like $H_0$, apart from allowing the distributional assumptions on $N_0$ and $N_1$ to be different from the test. The failure rates for the two processes are still assumed to be the same for the two groups. It allows us to address the question of what happens when we use a test designed on one set of assumptions on real data that diverges from those assumptions.
\item[$H_a'$] Like $H_a$, apart from allowing the distributional assumptions on $N_0$ and $N_1$ to be different from the test. The failure rates for the two processes are still assumed to differ by $\delta \lambda$ just as with $H_a$.
\item[$X_0$] The distribution of our test statistic, $X$ under $H_0$.
\item[$X_a$] The distribution of our test statistic, $X$ under $H_a$.
\item[$Y_0$] The distribution of our test statistic, $X$ under $H_0'$.
\item[$Y_a$] The distribution of our test statistic, $X$ under $H_a'$.
\item[$\phi$] The p-value of the hypothesis test, representing the likelihood that something as or more extreme (with ``extreme" defined in the direction of $H_a$, which here means towards greater treatment failure rates) as the observed test statistic could be seen under the assumptions of $H_0$.
\item[$\hat{\alpha}$] The type-1 error rate of the test. It is the only parameter defined arbitrarily by us. Under the assumptions of $H_0$, what is the probability the test will reject it? It is the theoretical false positive rate from the test. The binary decision saying weather or not there is a regression in the treatment group is made using the indicator variable: $I(\phi < \hat{\alpha})$.
\item[$\alpha(\hat{\alpha})$] The false positive rate (FPR) for real world data when there is no difference in rates between the groups (so, under $H_0'$) and we still use $I(\phi < \hat{\alpha})$ to reject the null hypothesis. We will see in proposition \ref{prop:hat_equal} that if $H_0' \sim H_0$ then $\alpha(\hat{\alpha}) = \hat{\alpha}$.
\item[$\beta(\hat{\alpha})$] The false negative rate of our test (as a function of the type-1 error rate $\hat{\alpha}$ we arbitrarily set), defined as the probability that we will fail to reject the null hypothesis under $H_a$ or $H_a'$.
\item[$\beta(\alpha)$] The false negative rate at the value of $\hat{\alpha}$ where we get a false positive rate of $\alpha$.
\end{description}

Let's also define henceforth for a distribution $X$ (typically the test statistic in our context), $F_X(x)=P(X<x)$, the cumulative distribution function (CDF) of $X$ and $S_X(x)=P(X>x)$, the survival function of $X$.

Armed with the notation defined above, we can now describe how our hypothesis test (one sided with alternate hypothesis being that the treatment group has a higher rate) proceeds (refer to figure \ref{fig:tradeoff}):

\begin{figure}
  \includegraphics[width=0.8\linewidth]{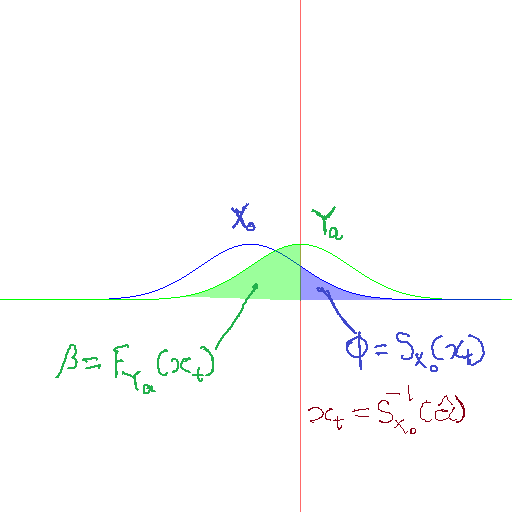}
  \caption{The false positive-false negative rate trade-off. As we increase $\hat{\alpha}$, $x_t$ which is the threshold the test statistic, $X$ needs to cross for rejecting the null increases. As a result, the p-value, $\phi$ which is the blue area reduces while the false negative rate, $\beta$ which is the green area increases.}
  \label{fig:tradeoff}
\end{figure}

\begin{description}
\item[Step 1:] Obtain the distribution, $X_0$ of the test statistic $X$ under the null hypothesis, $H_0$. This distribution is represented by the blue distribution in figure \ref{fig:tradeoff}.
\item[Step 2:] Observe the estimated value of the test statistic, $X=x$ in the data we collect. This value is represented by the red line in figure \ref{fig:tradeoff}. We assume that this test statistic is higher when the difference in rates between the treatment and control groups is higher.
\item[Step 3:] Find the probability of seeing something as or more extreme than $X=x$ under the assumptions of $X_0$. This is called the p-value, $\phi=P(X_0>x)$ and is represented by the blue area to the right of the red line in figure \ref{fig:tradeoff}.
\item[Step 4:] For some arbitrarily defined type-1 error rate (a common value is 5\%), $\hat{\alpha}$, reject the null and conclude there is a regression if $\phi<\hat{\alpha}$.
\end{description}

\begin{proposition}\label{prop:hat_equal}
Under the assumptions of the null hypothesis (the hypothesis whose distributional assumption for the test statistic is used to calculate the p-value, $\phi$), the type-1 error rate of our test ($\hat{\alpha}$) is the same as the false positive rate ($\alpha$).
\end{proposition}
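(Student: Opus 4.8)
The plan is to recognize this as the classical fact that, under the null, the p-value is uniformly distributed on $[0,1]$, so that the rejection probability equals the nominal level $\hat{\alpha}$ exactly. The whole argument reduces to a probability integral transform applied to the test statistic.

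First I would unwind the definitions to express the false positive rate as a probability involving the p-value. The rate $\alpha$ is, by definition, the probability of rejecting $H_0$ when the data truly comes from $H_0$; by Step 4 of the procedure, rejection occurs precisely when $\phi < \hat{\alpha}$. Using Step 3, the p-value is the random quantity $\phi = S_{X_0}(X)$, where $X$ is now treated as the random test statistic rather than a fixed observation. Crucially, under $H_0$ the statistic satisfies $X \sim X_0$, so that $\alpha = P\bigl(S_{X_0}(X) < \hat{\alpha}\bigr)$ with $X$ drawn from $X_0$.

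The heart of the argument is the transform $S_{X_0}(X) \sim \mathrm{Uniform}(0,1)$. Writing $S_{X_0} = 1 - F_{X_0}$ and using that $X$ has CDF $F_{X_0}$ under $H_0$, I would verify that for $u \in (0,1)$ we have $P\bigl(S_{X_0}(X) \le u\bigr) = P\bigl(F_{X_0}(X) \ge 1-u\bigr) = u$, invoking $F_{X_0}(X) \sim \mathrm{Uniform}(0,1)$. Substituting $u = \hat{\alpha}$ then yields $\alpha = P\bigl(S_{X_0}(X) < \hat{\alpha}\bigr) = \hat{\alpha}$, which is exactly the claim. Equivalently, one can argue directly from monotonicity: since $S_{X_0}$ is strictly decreasing, $S_{X_0}(X) < \hat{\alpha} \iff X > S_{X_0}^{-1}(\hat{\alpha})$, and the probability of the latter event is $S_{X_0}\bigl(S_{X_0}^{-1}(\hat{\alpha})\bigr) = \hat{\alpha}$.

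The main obstacle is the continuity assumption that the probability integral transform silently requires. The clean equality $F_{X_0}(X) \sim \mathrm{Uniform}(0,1)$, and the invertibility of $S_{X_0}$ used above, hold only when $F_{X_0}$ is continuous and strictly increasing; for the Poisson-based statistics of interest here the null distribution $X_0$ is discrete, so $F_{X_0}$ has jumps and the identity degrades to $\alpha \le \hat{\alpha}$, with equality recovered only up to the granularity of the atoms of $X_0$. I would therefore either state the result under the idealization that $X_0$ is continuous (or has been continuity-corrected), or note that the discrete case becomes exact in the large-sample regime where $X_0$ is well approximated by a continuous distribution. This caveat is where the care is concentrated; the transform itself is routine.
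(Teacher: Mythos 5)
Your argument is correct and is essentially the paper's own proof: both reduce the false positive rate to $P(S_{X_0}(X_0)<\hat{\alpha})$ and use the monotonicity of the survival function to obtain $S_{X_0}S_{X_0}^{-1}(\hat{\alpha})=\hat{\alpha}$, which is the same composition-with-inverse step the paper writes out (and your uniform-p-value framing is exactly the paper's subsequent corollary). Your caveat about discreteness is well taken and goes beyond the paper, which silently treats $S_{X_0}$ as invertible even though the Binomial/Poisson null distributions it later uses are discrete, so that in practice the test is conservative ($\alpha\le\hat{\alpha}$) rather than exact.
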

\begin{proof}
The p-value will be given by:
\begin{equation}\phi = P(X_0>X) = S_{X_0}(X)\label{p_value_definition}\end{equation}

where $S_{X_0}(x)=P(X_0>x)$ is the survival function of the distribution, $X_0$.

The false positive rate, $\alpha(\hat{\alpha})$ then becomes the probability that the p-value will be lower than the type-1 error rate, $\hat{\alpha}$.

\begin{equation}\alpha(\hat{\alpha}) = P(\phi < \hat{\alpha}) = P(S_{X_0}(X)<\hat{\alpha}\label{false_positive_rate})\end{equation}

Under the assumptions of the null hypothesis however, the test statistic $X$ is distributed as $X_0$:

$$X \overset{H_0}{\sim} X_0$$

Substituting into equation \ref{false_positive_rate} we get:

\begin{align}
\alpha(\hat{\alpha}) = P(S_{X_0}(X_0)<\hat{\alpha}) \nonumber \\
= P(X_0 > S_{X_0}^{-1}(\hat{\alpha})) \nonumber \\
= S_{X_0}S_{X_0}^{-1}(\hat{\alpha})\nonumber \\ 
= \hat{\alpha} \label{hat_alpha}
\end{align}

Where in the third step, we used the fact that $S_{X_0}(x)$ is a monotonically decreasing function.
\end{proof}

\begin{corollary}
Under the null hypothesis, the p-value ($\phi$) is uniformly distributed over $(0,1)$.
\end{corollary}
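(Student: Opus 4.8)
The plan is to recognize that this corollary is essentially a restatement of Proposition \ref{prop:hat_equal} and requires almost no new work. A random variable supported on $(0,1)$ is uniformly distributed if and only if its CDF is the identity map on that interval, so it suffices to show that $F_\phi(u) = u$ for every $u \in (0,1)$.

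First I would write down the CDF of the p-value, $F_\phi(u) = P(\phi < u)$, and compare it with equation \ref{false_positive_rate}. The false positive rate $\alpha(\hat{\alpha}) = P(\phi < \hat{\alpha})$ is precisely $F_\phi$ evaluated at $\hat{\alpha}$. Proposition \ref{prop:hat_equal}, in equation \ref{hat_alpha}, has already established that under $H_0$ one has $\alpha(\hat{\alpha}) = \hat{\alpha}$ for arbitrary $\hat{\alpha} \in (0,1)$. Renaming $u = \hat{\alpha}$ and letting it range over the whole interval then gives $F_\phi(u) = u$, which is exactly the CDF of the uniform distribution on $(0,1)$, completing the argument.

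Since the proposition does all the algebra, the only point that warrants care is the set of distributional hypotheses inherited from it. The key step of the proposition, $S_{X_0} S_{X_0}^{-1}(\hat{\alpha}) = \hat{\alpha}$, relied on $S_{X_0}$ being a continuous, strictly monotone (hence invertible) function. I would therefore state explicitly that the result as written holds when the null distribution $X_0$ is continuous with no atoms; for a discrete test statistic the survival function is not genuinely invertible, the chain $S_{X_0} S_{X_0}^{-1}(\hat{\alpha}) = \hat{\alpha}$ breaks, and one only recovers that $\phi$ stochastically dominates a uniform rather than being exactly uniform. This continuity caveat, rather than any computation, is the sole obstacle worth flagging.
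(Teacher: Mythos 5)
Your proof is correct and follows essentially the same route as the paper's: invoke Proposition \ref{prop:hat_equal} to conclude $P(\phi < u) = u$ for all $u \in (0,1)$, so the CDF of $\phi$ is that of $U(0,1)$. Your added caveat that this requires $S_{X_0}$ to be continuous and strictly monotone --- and that for a discrete test statistic (such as the Binomial one actually used in the rate test) the p-value is only stochastically larger than uniform rather than exactly uniform --- is a genuine and worthwhile point that the paper's own proof silently glosses over.
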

\begin{proof}
From equation \ref{false_positive_rate} and the result of proposition \ref{prop:hat_equal} we have,
$$P(\phi < \hat{\alpha}) = \hat{\alpha}$$
The only distribution that satisfies this property is the uniform distribution, $U(0,1)$.
\end{proof}

In making a binary decision on weather or not there is a regression in failure rates for the treatment group, there will be a trade off between false negative (failing to reject null when its false) and false positive (rejecting null when it's true) error rates. 

To define false negative rate, we assume there is actually a difference in the failure rates for the treatment and control groups. We assumed that the test statistic follows the distribution $Y_a$ under this hypothesis. If it so happens that the distributional assumptions on $N_0$ and $N_1$ happen to be of the same form as those used to derive $X_0$ under the hypothesis $H_0$ (apart from the failure rate corresponding to $N_1$ being higher than that for $N_0$ by $\delta \lambda$), we get $Y_a \sim X_a$ but that's incidental. Referencing figure \ref{fig:tradeoff} again, we get the following proposition:

\begin{proposition}\label{prop:fnr_def}
The false negative rate of our hypothesis test described earlier as a function of the type-1 error rate we arbitrarily set is given by: $\beta = F_{Y_a}(S_{X_0}^{-1}(\hat{\alpha}))$
\end{proposition}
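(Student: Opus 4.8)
The plan is to mirror the argument used in Proposition \ref{prop:hat_equal}, but to track the probability of the complementary event (failing to reject) and to evaluate it under the distribution that actually governs the test statistic when the real data violate the test's assumptions. First I would characterize the false negative event precisely. A false negative occurs when the data are genuinely drawn under $H_a'$ (the rates really do differ by $\delta\lambda$) and yet the test does not reject, which by Step 4 of the procedure means $\phi \geq \hat{\alpha}$. Hence
\[
\beta(\hat{\alpha}) = P(\phi \geq \hat{\alpha} \mid H_a').
\]

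Next I would substitute the definition of the p-value from equation \ref{p_value_definition}, namely $\phi = S_{X_0}(X)$, and use the fact that under $H_a'$ the test statistic $X$ is by definition distributed as $Y_a$. This turns the expression into $\beta = P\bigl(S_{X_0}(Y_a) \geq \hat{\alpha}\bigr)$. The one step requiring a moment's care is inverting this inequality: since $S_{X_0}$ is monotonically decreasing (the same property invoked at the end of Proposition \ref{prop:hat_equal}), applying $S_{X_0}^{-1}$ flips the direction, so the event $S_{X_0}(X) \geq \hat{\alpha}$ is equivalent to $X \leq S_{X_0}^{-1}(\hat{\alpha})$. This decreasing direction is precisely why the false negative lands in the \emph{lower} tail, the green area to the left of the threshold $x_t = S_{X_0}^{-1}(\hat{\alpha})$ in figure \ref{fig:tradeoff}, and therefore yields a CDF rather than a survival function.

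Finally I would read off the result directly:
\[
\beta = P\bigl(Y_a \leq S_{X_0}^{-1}(\hat{\alpha})\bigr) = F_{Y_a}\bigl(S_{X_0}^{-1}(\hat{\alpha})\bigr),
\]
using the definition $F_{Y_a}(x) = P(Y_a < x)$. Because the distributions in play are taken to be continuous, the distinction between strict and non-strict inequalities is immaterial, so I would not dwell on it. The computation itself is short; the substantive content is conceptual, and the hardest part is simply keeping straight which distribution plays which role. The threshold the test statistic must cross is set by $X_0$, the null distribution used to compute p-values, whereas the realized statistic is governed by $Y_a$, the true alternate under the (possibly violated) real-world assumptions. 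Conflating $Y_a$ with $X_a$ is exactly the pitfall this whole section is built to expose, so the emphasis in the write-up should be on justifying why $Y_a$, and not $X_a$, is the correct distribution to integrate against once the Poisson assumptions baked into $X_0$ no longer hold.
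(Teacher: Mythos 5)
Your argument is correct and is essentially the paper's own (second, alternate) proof of this proposition: define $\beta$ as the probability of failing to reject under the alternate, substitute $\phi = S_{X_0}(X)$ with $X \sim Y_a$, and invert the decreasing survival function to land on $F_{Y_a}(S_{X_0}^{-1}(\hat{\alpha}))$. No gaps to report.
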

\begin{proof}
Refer again to figure \ref{fig:tradeoff} where the green distribution represents $Y_a$, the hypothesis where the failure rate of the treatment group is higher than that of the control group. Our test $I(\phi<\hat{\alpha})$ translates to some threshold, $x_t$ on the observed test statistic where we reject the null if $X>x_t$. Since we have per equation \ref{p_value_definition}, $\phi = S_{X_0}(X)$, we get:

\begin{equation} x_t = S_{X_0}^{-1}(\phi) \label{test_stat_tau}\end{equation}

The false negative rate then becomes the probability of the observed test statistic being below this threshold:

$$\beta(\hat{\alpha}) = P(X<x_t) = F_{X}(x_t)$$
where $F_X(x)$ is the cumulative density function of $X$.

Substituting equation \ref{test_stat_tau} and noting $X \sim Y_a$ under current assumptions we get:

\begin{equation}
\beta(\hat{\alpha}) = F_{Y_a}S_{X_0}^{-1}(\hat{\alpha})
\label{beta_def}
\end{equation}
\end{proof}

As a special case, if in the alternate hypothesis, the distributional assumptions of $N_0$ and $N_1$ are maintained, we have $Y_a \sim X_a$ and equation \ref{beta_def} becomes:
\begin{equation}
\beta(\hat{\alpha}) = F_{X_a}S_{X_0}^{-1}(\hat{\alpha})
\label{beta_definition2}
\end{equation}

Alternately, we can also proceed as follows to prove proposition \ref{prop:fnr_def}:
\begin{proof}
The false negative rate, $\beta$ is defined as the probability of failing to reject the null hypothesis conditional on it being true. The probability of failing to reject the null is $P(\phi > \hat{\alpha})$. Using equation \ref{p_value_definition}, this becomes:

$$\beta(\hat{\alpha}) = P(S_{X_0}(X) > \hat{\alpha})$$

But, under the alternate hypothesis we have:

$$X \overset{H_a}{\sim} Y_a$$
This implies

\begin{align}
\beta(\hat{\alpha}) = P(S_{X_0}(Y_a)>\hat{\alpha}) \nonumber \\
= P(Y_a< S_{X_0}^{-1}(\hat{\alpha})) \nonumber \\
= F_{Y_a}(S_{X_0}^{-1}(\hat{\alpha})) \nonumber
\end{align}
Where in the second equation we used the fact that the survival function, $S_{X_0}$ is a decreasing function.
\end{proof}

What if we assumed some distributions for $N_0$ and $N_1$, leading to the null hypothesis, $H_0$. In real life, $N_0$ and $N_1$ follow some other distribution, while still having the same rates for the two processes. This leads to another null hypothesis, $H_0'$. The test statistic under the two hypotheses:

$$X \overset{H_0}{\sim} X_0$$
$$X \overset{H_0'}{\sim} Y_0$$
In this case, the violation of the distributional assumption causes the false positive and type-1 error rates to diverge unlike equation \ref{hat_alpha}.

\begin{proposition}
Under $H_0'$, the false positive rate as a function of the type-1 error is given by: 

\begin{equation}\alpha(\hat{\alpha}) = S_{Y_0}S_{X_0}^{-1}(\hat{\alpha})
\label{alpha_general}
\end{equation}
\end{proposition}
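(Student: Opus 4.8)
The plan is to mirror the structure of the proof of Proposition \ref{prop:hat_equal}, since the two results differ only in which distribution actually governs the test statistic. First I would recall that the false positive rate is defined as the probability that the p-value falls below the chosen type-1 error threshold, so that $\alpha(\hat{\alpha}) = P(\phi < \hat{\alpha})$. Using the p-value expression from equation \ref{p_value_definition}, namely $\phi = S_{X_0}(X)$, this becomes $\alpha(\hat{\alpha}) = P(S_{X_0}(X) < \hat{\alpha})$. This step is identical to the earlier proposition: the p-value is always computed against the test's assumed null distribution $X_0$, because that is the distribution the test ``believes in'' when it converts an observed statistic into a p-value.

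The crucial divergence from Proposition \ref{prop:hat_equal} comes in the next step. Under $H_0'$ the test statistic no longer follows $X_0$ but instead follows $Y_0$, that is $X \overset{H_0'}{\sim} Y_0$. Substituting this in gives $\alpha(\hat{\alpha}) = P(S_{X_0}(Y_0) < \hat{\alpha})$. Since $S_{X_0}$ is monotonically decreasing (the same fact invoked in equation \ref{hat_alpha}), I would invert the inequality to obtain $P(Y_0 > S_{X_0}^{-1}(\hat{\alpha}))$, which is by definition $S_{Y_0}(S_{X_0}^{-1}(\hat{\alpha}))$, the claimed expression. Inverting $S_{X_0}$ is legitimate precisely because strict monotonicity guarantees $S_{X_0}^{-1}$ is well-defined.

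There is no hard technical obstacle here; the entire content of the proposition is conceptual rather than computational. The point to be careful about is that $S_{X_0}$ and $S_{Y_0}$ do \emph{not} cancel as they did in equation \ref{hat_alpha}. In the earlier proof the survival function used to define the p-value and the distribution of the statistic were the \emph{same} object ($X_0$), so the composition $S_{X_0} S_{X_0}^{-1}$ collapsed to the identity and yielded $\alpha(\hat{\alpha}) = \hat{\alpha}$. Here the p-value is still calibrated against $X_0$ while the data actually realizes $Y_0$, and it is exactly this mismatch between the two distributions that prevents the cancellation and makes the false positive rate deviate from $\hat{\alpha}$. The main discipline in writing the proof is therefore to keep $X_0$ and $Y_0$ conceptually separate throughout, rather than to carry out any difficult manipulation.
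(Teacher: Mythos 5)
Your proof is correct and is precisely what the paper intends: its own ``proof'' is just the one-line instruction to proceed as in equation \eqref{hat_alpha}, and you carry out exactly that argument, with the only change being the substitution $X \overset{H_0'}{\sim} Y_0$ in place of $X \overset{H_0}{\sim} X_0$ so that the survival functions no longer cancel. Nothing further is needed.
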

\begin{proof}
Left to the reader, proceed similarly to equation \ref{hat_alpha}.
\end{proof}

\begin{corollary}\label{alpha_adjustment}
If we're applying a hypothesis test that is designed under $H_0$ that involves the test statistic following a distribution given by $X_0$ where as we expect to encounter data where we know the null hypothesis is actually going to follow the distribution $Y_0$. If we're then targeting a false positive rate of $\alpha$, we should set the type-1 error rate, $\hat{\alpha}$ to:

\begin{equation}
\hat{\alpha} = S_{X_0}S_{Y_0}^{-1}(\alpha)
\end{equation} 
and the probability of observing something as or more extreme than the test statistic under the distributional assumptions of $Y_0$ becomes:

\begin{equation}
\phi' = S_{X_0}S_{Y_0}^{-1}(\phi)
\end{equation} 
where $\phi$ is the p-value under $H_0$.
\end{corollary}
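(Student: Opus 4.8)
The plan is to obtain both displayed identities from the relation $\alpha(\hat{\alpha}) = S_{Y_0}S_{X_0}^{-1}(\hat{\alpha})$ established in the preceding proposition (equation \ref{alpha_general}), using only that survival functions are strictly decreasing and hence invertible. The two claims are in fact the \emph{same} inversion applied to two different inputs, so once the first is settled the second follows by reading off the common transformation.

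First I would solve equation \ref{alpha_general} for $\hat{\alpha}$ in terms of the target false positive rate $\alpha$. Applying $S_{Y_0}^{-1}$ to both sides of $\alpha = S_{Y_0}S_{X_0}^{-1}(\hat{\alpha})$ gives $S_{Y_0}^{-1}(\alpha) = S_{X_0}^{-1}(\hat{\alpha})$, and then applying $S_{X_0}$ isolates $\hat{\alpha} = S_{X_0}S_{Y_0}^{-1}(\alpha)$, the first displayed equation. Each step is legitimate because $S_{X_0}$ and $S_{Y_0}$ are strictly decreasing on the common support of the test statistic and map it bijectively onto $(0,1)$, so $S_{Y_0}^{-1}(\alpha)\in(0,1)$ lies in the domain of $S_{X_0}$ and the composite $S_{X_0}S_{Y_0}^{-1}$ is well defined there.

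Next I would read off the p-value identity from the \emph{same} map $g := S_{X_0}S_{Y_0}^{-1}$, now interpreting $\phi$ as the significance carried by the observed statistic on the true-null ($Y_0$) scale --- the per-observation analogue of the target rate $\alpha$ --- and $\phi'$ as its image on the test's native ($X_0$) scale. Writing the observed statistic as $x$, we have $\phi = S_{Y_0}(x)$, hence $x = S_{Y_0}^{-1}(\phi)$; substituting into $\phi' = S_{X_0}(x)$ yields $\phi' = S_{X_0}S_{Y_0}^{-1}(\phi) = g(\phi)$, the second displayed equation. Since $g$ is the composition of two decreasing functions it is \emph{increasing}, so $\phi < \alpha \iff g(\phi) < g(\alpha)$, i.e. $\phi' < \hat{\alpha}$; this confirms both that $\phi'$ is precisely the quantity one compares against the adjusted threshold and that the two formulas must carry the identical functional form $g$.

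The step I expect to be the main obstacle is pinning down the order of composition, which is delicate: swapping the roles of $X_0$ and $Y_0$ replaces $g$ by its inverse $S_{Y_0}S_{X_0}^{-1}$ and produces the reversed transformation, so a derivation that is correct in substance can still land on the wrong formula. I would guard against this with two checks that jointly single out $S_{X_0}S_{Y_0}^{-1}$: first, that $g$ comes out \emph{increasing} (a decreasing composite would signal the wrong pairing of inner and outer functions), and second, that the two rejection events $\{\phi < \alpha\}$ and $\{\phi' < \hat{\alpha}\}$ genuinely coincide. Only the order $S_{X_0}S_{Y_0}^{-1}$ passes both checks, which rules out the reciprocal composition.
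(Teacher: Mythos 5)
Your derivation of the first identity is correct and is exactly the paper's (implicit) route: the corollary is stated without proof as an immediate consequence of equation \ref{alpha_general}, and the paper itself performs precisely your inversion --- apply $S_{Y_0}^{-1}$, then $S_{X_0}$ --- at the start of the proof of proposition \ref{fnr_fpr}, where it writes $\hat{\alpha} = S_{X_0}S_{Y_0}^{-1}(\alpha)$.

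The second identity is where your argument has a genuine gap. To make the displayed formula come out, you redefine $\phi$ as the significance on the $Y_0$ scale, $\phi = S_{Y_0}(x)$, and $\phi'$ as its image on the test's native $X_0$ scale. That is the opposite of what the statement (and the paper) say: equation \ref{p_value_definition} defines the p-value as $\phi = S_{X_0}(X)$, the corollary's prose says ``$\phi$ is the p-value under $H_0$'', and $\phi'$ is described as the as-or-more-extreme probability under the distributional assumptions of $Y_0$, i.e.\ $\phi' = S_{Y_0}(x)$. With those definitions the correct relation is $\phi' = S_{Y_0}S_{X_0}^{-1}(\phi)$, the \emph{inverse} of the displayed map; the printed second equation is thus inconsistent with its own prose (most plausibly a transcription error in the paper), and a proof must either flag this or derive the inverse form --- silently swapping the roles of $\phi$ and $\phi'$ proves a relabeled statement, not the stated one. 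Moreover, the two checks you offer to pin down the composition order cannot do so: $S_{Y_0}S_{X_0}^{-1}$ is \emph{also} increasing (it is the inverse of an increasing bijection, and equally a composition of two decreasing maps), so the monotonicity check rules out nothing; and the event-coincidence check is circular, because which pairing of rejection events should coincide ($\{\phi<\alpha\}$ with $\{\phi'<\hat{\alpha}\}$, or $\{\phi<\hat{\alpha}\}$ with $\{\phi'<\alpha\}$) is determined precisely by which scale $\phi$ lives on --- and under the paper's definitions the test's decision rule is $I(\phi<\hat{\alpha})$, pairing $\phi$ with $\hat{\alpha}$, not with $\alpha$. So the claim that ``only the order $S_{X_0}S_{Y_0}^{-1}$ passes both checks'' is false, and the composition order in the p-value formula is not actually established by your argument.
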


\begin{corollary}
If the effect size, $\delta \lambda=0$, the false negative rate, $\beta(\alpha)$ as a function of the false positive rate, $\alpha$ is given by:
$$\beta(\alpha)=1-\alpha$$
 \end{corollary}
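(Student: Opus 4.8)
The plan is to combine the false-negative-rate formula from Proposition \ref{prop:fnr_def} with the false-positive-rate formula from equation \ref{alpha_general}, exploiting the fact that setting $\delta\lambda = 0$ collapses the alternate hypothesis onto the null.

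First I would observe that when the effect size $\delta\lambda = 0$, the treatment and control groups share the same failure rate, so the alternate hypothesis $H_a'$ becomes indistinguishable from the null hypothesis $H_0'$: the distributional assumptions on $N_0$ and $N_1$ are the same in both, and the rates no longer differ. Consequently the distribution of the test statistic under $H_a'$ coincides with its distribution under $H_0'$, i.e. $Y_a \sim Y_0$.

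Next I would substitute this into the false-negative-rate expression. Proposition \ref{prop:fnr_def} gives $\beta(\hat{\alpha}) = F_{Y_a}(S_{X_0}^{-1}(\hat{\alpha}))$, so with $Y_a \sim Y_0$ this becomes $\beta(\hat{\alpha}) = F_{Y_0}(S_{X_0}^{-1}(\hat{\alpha}))$. Using the complementary relationship between the CDF and survival function, $F_{Y_0}(x) = 1 - S_{Y_0}(x)$, I would rewrite this as $\beta(\hat{\alpha}) = 1 - S_{Y_0}(S_{X_0}^{-1}(\hat{\alpha}))$.

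Finally I would recognize the composite $S_{Y_0}(S_{X_0}^{-1}(\hat{\alpha}))$ as exactly the false positive rate $\alpha(\hat{\alpha})$ given in equation \ref{alpha_general}. This yields $\beta(\hat{\alpha}) = 1 - \alpha(\hat{\alpha})$, and re-expressing $\beta$ as a function of $\alpha$ rather than of $\hat{\alpha}$ gives the claimed identity $\beta(\alpha) = 1 - \alpha$. The only real subtlety, which I would flag as the crux rather than a genuine obstacle, is the opening observation that $\delta\lambda = 0$ forces $Y_a \sim Y_0$; once that equivalence is in hand, the result is a one-line consequence of the two previously established propositions together with the CDF/survival complement. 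A minor caveat worth noting is that the complement identity $F_{Y_0} = 1 - S_{Y_0}$ requires $Y_0$ to have no atom at the relevant threshold, which holds for the continuous test statistics considered here.
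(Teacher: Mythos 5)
Your proposal is correct and follows essentially the same route as the paper's own proof: note that $\delta\lambda=0$ gives $Y_a \sim Y_0$, substitute into $\beta(\hat{\alpha}) = F_{Y_a}S_{X_0}^{-1}(\hat{\alpha})$, apply $F_{Y_0} = 1 - S_{Y_0}$, and identify the result with $1-\alpha(\hat{\alpha})$ via equation \ref{alpha_general}. Your closing caveat about atoms is a fair observation the paper does not address, but it does not change the argument.
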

\begin{proof}
The result follows from equations \ref{alpha_general} and \ref{beta_def} and noting that if $\delta\lambda=0$ then $Y_a \sim Y_0$ (assuming the only difference between $H_0'$ and $H_a'$ is the effect size).

\begin{align*}
\beta(\hat{\alpha}) = F_{Y_a}S_{X_0}^{-1}(\hat{\alpha}) \\
= F_{Y_0}S_{X_0}^{-1}(\hat{\alpha})\\
 = 1-S_{Y_0}S_{X_0}^{-1}(\hat{\alpha})\\
=1-\alpha(\hat{\alpha})
\end{align*}
Note that this $\beta$ profile is equivalent to tossing a coin with $\alpha$ being the probability of heads and rejecting the null if we get heads.
\end{proof}

\begin{proposition}\label{fnr_fpr}
Under $H_a'$, the false negative rate as a function of false positive rate is given by: $\beta(\alpha) = F_{Y_a}S_{Y_0}^{-1}(\alpha)$
\end{proposition}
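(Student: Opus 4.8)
The plan is to eliminate the auxiliary type-1 error rate $\hat{\alpha}$ between the two expressions already derived: the false positive rate of equation \ref{alpha_general} and the false negative rate of equation \ref{beta_def}, both of which are written as functions of $\hat{\alpha}$. The key observation is that both quantities depend on $\hat{\alpha}$ only through the common term $S_{X_0}^{-1}(\hat{\alpha})$, which is precisely the rejection threshold $x_t$ on the test statistic identified in equation \ref{test_stat_tau}. Since the false positive and false negative rates are both computed against this same threshold, merely under different data distributions ($Y_0$ versus $Y_a$), expressing $\beta$ as a function of $\alpha$ amounts to reparametrizing both rates by $x_t$ and then discarding it.

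First I would recall from equation \ref{alpha_general} that, under $H_0'$, $\alpha(\hat{\alpha}) = S_{Y_0}(S_{X_0}^{-1}(\hat{\alpha}))$. Applying the inverse survival function $S_{Y_0}^{-1}$ to both sides, and using that $S_{Y_0}$ is monotonically decreasing and hence invertible on its support, I would obtain the identity $S_{X_0}^{-1}(\hat{\alpha}) = S_{Y_0}^{-1}(\alpha)$. This pins down the common threshold explicitly in terms of the target false positive rate $\alpha$.

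Next I would substitute this identity into the false negative rate of equation \ref{beta_def}, $\beta(\hat{\alpha}) = F_{Y_a}(S_{X_0}^{-1}(\hat{\alpha}))$, replacing $S_{X_0}^{-1}(\hat{\alpha})$ by $S_{Y_0}^{-1}(\alpha)$ to land directly on $\beta(\alpha) = F_{Y_a}(S_{Y_0}^{-1}(\alpha))$, as claimed. Equivalently, and perhaps more transparently, one can bypass $\hat{\alpha}$ entirely: fixing the rejection threshold $x_t$, the false positive rate is $\alpha = P(Y_0 > x_t) = S_{Y_0}(x_t)$ while the false negative rate is $\beta = P(Y_a < x_t) = F_{Y_a}(x_t)$; solving the first for $x_t = S_{Y_0}^{-1}(\alpha)$ and inserting it into the second yields the same formula.

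The argument is essentially routine once the two prior results are in hand, so I do not anticipate a deep obstacle. The one point demanding care is invertibility: every manipulation relies on the survival functions $S_{X_0}$ and $S_{Y_0}$ being strictly monotone, so that $S_{X_0}^{-1}$ and $S_{Y_0}^{-1}$ are genuine functions. For continuous test statistics with full support this is immediate; if the test statistic carries atoms or has flat regions in its distribution, one would instead interpret the inverses as generalized quantile inverses and verify that the compositions still agree, which is the only place the proof could become delicate.
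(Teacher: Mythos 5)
Your proposal is correct and follows essentially the same route as the paper: both eliminate $\hat{\alpha}$ between equations \ref{alpha_general} and \ref{beta_def}, the paper by solving for $\hat{\alpha}=S_{X_0}S_{Y_0}^{-1}(\alpha)$ and letting $S_{X_0}^{-1}S_{X_0}$ cancel, you by isolating the common threshold $S_{X_0}^{-1}(\hat{\alpha})=S_{Y_0}^{-1}(\alpha)$ directly. Your closing remark on monotonicity of the survival functions is a reasonable extra caution that the paper leaves implicit.
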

\begin{proof}
From equation \ref{alpha_general} we have:

$$\hat{\alpha} = S_{X_0}S_{Y_0}^{-1}(\alpha)$$

Substituting into equation \ref{beta_def} we get:

\begin{align}
\beta(\alpha) = F_{Y_a}S_{X_0}^{-1}S_{X_0}S_{Y_0}^{-1}(\alpha)\nonumber \\
= F_{Y_a}S_{Y_0}^{-1}(\alpha)\label{beta_general}
\end{align}
\end{proof}

Equations \ref{beta_def} and \ref{beta_general} define the false negative rate as a function of the type-1 error rate (which we set arbitrarily) and false positive rate (which we get from the real data) respectively. We should expect:

\begin{itemize}
\item{The higher we set the type-1 error rate, $\hat{\alpha}$, the more prone to fire our hypothesis test, $I(\phi<\hat{\alpha})$ is becoming, rejecting the null hypothesis more easily. So, the false positive rate should become higher when we do this. Hence, the type-1 error rate should  be an increasing function of the type-1 error rate we set.}
\item{By a similar argument, the higher we set the type-1 error rate, the lower our false negative rate should become, since the test is only more likely to reject the null.}
\item{From the above two arguments, it follows that the false negative rate should always be a decreasing function of the false positive rate.}
\end{itemize}

Here, we prove the second and third conclusions above.

\begin{proposition}
The false negative rate can only be a decreasing function of the type-1 error rate we set and false positive rate our test consequently provides.
\end{proposition}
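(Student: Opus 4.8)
The plan is to read the monotonicity directly off the closed-form expressions for $\beta$ that have already been derived, namely equation \ref{beta_def}, $\beta(\hat{\alpha}) = F_{Y_a}(S_{X_0}^{-1}(\hat{\alpha}))$, and equation \ref{beta_general}, $\beta(\alpha) = F_{Y_a}(S_{Y_0}^{-1}(\alpha))$. Each of these writes $\beta$ as a composition of maps, so the claim should follow purely by tracking whether each factor is increasing or decreasing; no new computation is needed. Accordingly, the first step is to assemble the elementary monotonicity facts: (i) any CDF $F_{Y_a}$ is non-decreasing in its argument; (ii) any survival function $S_X(x) = P(X>x) = 1 - F_X(x)$ is non-increasing; and (iii) the inverse of a strictly decreasing function is again decreasing.

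I would then treat the two cases in turn. For the dependence on the type-1 error rate, I view $\beta(\hat{\alpha}) = F_{Y_a}\big(S_{X_0}^{-1}(\hat{\alpha})\big)$ as the composition $F_{Y_a} \circ S_{X_0}^{-1}$. By facts (ii)--(iii) the inner map $S_{X_0}^{-1}$ is decreasing in $\hat{\alpha}$, while by fact (i) the outer map $F_{Y_a}$ is increasing; an increasing function of a decreasing function is decreasing, so $\beta$ decreases as $\hat{\alpha}$ grows. For the dependence on the realized false positive rate, the identical argument applied to equation \ref{beta_general}, with $S_{Y_0}^{-1}$ in place of $S_{X_0}^{-1}$, shows that $\beta(\alpha)$ is decreasing in $\alpha$. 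Together these establish the second and third bullet points preceding the proposition. The conclusion also matches the geometry of figure \ref{fig:tradeoff}: raising $\hat{\alpha}$ moves the rejection threshold $x_t$ to the left, which shrinks the green area that represents $\beta$.

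The one genuine subtlety, and the only real obstacle, is justifying that $S_{X_0}^{-1}$ and $S_{Y_0}^{-1}$ are well-defined and strictly decreasing, since a survival function is only guaranteed to be non-increasing and, for a discrete test statistic such as a Poisson count, is a step function that is not invertible in the usual sense. I would handle this by working in the continuous, strictly-decreasing regime implicitly assumed throughout the paper, so that $S^{-1}$ is a genuine inverse, and remark that in general one may replace $S^{-1}$ by the quantile-type generalized inverse $S^{-1}(u) = \inf\{x : S(x) \le u\}$, which remains non-increasing and leaves the composition argument intact. With that caveat in place the result is immediate.
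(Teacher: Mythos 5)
Your proof is correct, and it reaches the conclusion by a genuinely more elementary route than the paper. The paper differentiates $\beta(\alpha) = F_{Y_a}\bigl(S_{Y_0}^{-1}(\alpha)\bigr)$ via the chain rule and reads off the sign from the product of a non-negative density $f_{Y_a}$ and the negative derivative of $S_{Y_0}^{-1}$; it then asserts the $\hat{\alpha}$ case is analogous. You use the identical decomposition of $\beta$ as a composition of $F_{Y_a}$ with an inverse survival function, but you argue purely from monotonicity of each factor rather than from calculus. What your version buys is robustness: the paper's argument tacitly requires $Y_a$ to admit a density and $S_{Y_0}^{-1}$ to be differentiable, neither of which holds for the discrete (Binomial, Poisson, negative binomial) test statistics that this paper actually works with, where the survival functions are step functions. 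Your closing remark about replacing $S^{-1}$ with the generalized inverse $S^{-1}(u)=\inf\{x: S(x)\le u\}$ is exactly the repair needed to make the statement rigorous in that setting, at the mild cost of weakening the conclusion from strictly decreasing to non-increasing --- which is in fact the correct statement for a discrete test, whose $\beta$--$\alpha$ curve is piecewise constant. (As a minor aside, the paper's final sentence says the \emph{derivative} $\partial\beta/\partial\alpha$ is ``monotonically decreasing'' when what has actually been shown is that it is negative, i.e.\ that $\beta$ itself is decreasing; your phrasing avoids that slip.)
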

\begin{proof}
We have from equation \ref{beta_general},

$$\beta(\alpha) = F_{Y_a}S_{Y_0}^{-1}(\alpha)$$
Differentiating with respect to $\alpha$ we get:

$$\frac{\partial \beta(\alpha)}{\partial \alpha} = \frac{\partial F_{Y_a}S_{Y_0}^{-1}(\alpha)}{\partial S_{Y_0}^{-1}(\alpha)}\frac{\partial S_{Y_0}^{-1}(\alpha)}{\partial \alpha}$$

$$=f_{Y_a}(S_{Y_0}^{-1}(\alpha))\frac{\partial S_{Y_0}^{-1}(\alpha)}{\partial \alpha}$$
Where $f_{Y_a}$ is the probability density function of $Y_a$ which will always be positive while the second term will be negative since the survival function of any distribution is always decreasing and so is its inverse. Hence we always have $\frac{\partial \beta(\alpha)}{\partial \alpha}$ being a monotonically decreasing function and we can similarly prove the same for 
$\frac{\partial \beta(\hat{\alpha})}{\partial \hat{\alpha}}$
\end{proof}

Since our hypothesis test is basically an oracle, that is supposed to alert us when there is a difference and not fire when there isn't, there is a good argument to the assertion that the $\beta(\alpha)$ profile is all that matters when comparing various hypothesis tests. If a test produces a better false negative rate for any given actual false positive rate ($\alpha$) than another (everything else being equal), it should be preferred. Such a test is called ``more powerful'' since the power is defined as $1-\beta(\alpha)$.

\subsection{The most powerful test for failure rates}

As mentioned in section I, talking of failure rates is synonymous with fitting a Poisson process to whatever point process we're modeling and finding the rate, $\lambda$ of that Poisson process. This gives us a good starting point for comparing failure rates since we now have not just a statistic, but an entire point process to work with.

For comparing the rate parameters of two Poisson point processes, there exists a uniformly most powerful (UMP) test (see section 4.5 of \cite{lehman}). This test is mathematically proven to produce the best false negative rate (power) given any false positive rate, effect size and amount of data (in this context, observation period). We will describe the test here, but refer to \cite{lehman} for a detailed treatment and why this is the ``Uniformly most powerful (UMP)" test for comparing Poisson rates.

To review, we have two Poisson processes. We observe $n_1$ events in time $t_1$ from the first and $n_2$ events in time $t_2$ from the second one. Hence, the estimates for the two failure rates we want to compare are: $\lambda_1=\frac{n_1}{t_1}$ and $\lambda_2 = \frac{n_2}{t_2}$. The proceeding theorem will help convert this hypothesis testing problem into a simpler one, but we need a few Lemmas before we get to it.

\begin{lemma}\label{sum_of_poissons}
If we sum two independent Poisson random variables with means $\mu_0$ and $\mu_1$, we get another Poisson random variable with mean $\mu_0+\mu_1$.
\end{lemma}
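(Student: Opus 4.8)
The plan is to compute the probability mass function of the sum directly via a convolution and recognize the resulting expression as a Poisson PMF. Let $X_0$ and $X_1$ be the two independent Poisson random variables with means $\mu_0$ and $\mu_1$ respectively, so that $P(X_i = k) = e^{-\mu_i}\mu_i^k/k!$ as in equation \ref{poisson_pmf}. I want to show that $S = X_0 + X_1$ satisfies $P(S = n) = e^{-(\mu_0+\mu_1)}(\mu_0+\mu_1)^n/n!$ for every nonnegative integer $n$.

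First I would condition on the value of $X_0$. Since the two variables are independent and each is nonnegative-integer-valued, the event $\{S = n\}$ decomposes as the disjoint union over $k = 0, 1, \dots, n$ of the events $\{X_0 = k,\, X_1 = n-k\}$. Summing the joint probabilities and using independence gives the convolution
\begin{equation}
P(S = n) = \sum_{k=0}^{n} P(X_0 = k)\, P(X_1 = n-k) = \sum_{k=0}^{n} \frac{e^{-\mu_0}\mu_0^k}{k!}\cdot\frac{e^{-\mu_1}\mu_1^{n-k}}{(n-k)!}.
\end{equation}

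Next I would factor the constant $e^{-(\mu_0+\mu_1)}$ out of the sum and multiply and divide by $n!$ so as to manufacture binomial coefficients, rewriting each summand as $\frac{1}{n!}\binom{n}{k}\mu_0^k \mu_1^{n-k}$. The key step is then to recognize the remaining sum $\sum_{k=0}^n \binom{n}{k}\mu_0^k\mu_1^{n-k}$ as the binomial expansion of $(\mu_0+\mu_1)^n$. Collapsing it yields $P(S=n) = e^{-(\mu_0+\mu_1)}(\mu_0+\mu_1)^n/n!$, which is exactly the Poisson PMF with mean $\mu_0+\mu_1$, completing the argument.

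The main obstacle, and really the only nontrivial move, is the bookkeeping that turns the product of two factorials in the denominator into a single binomial coefficient so that the binomial theorem applies; everything else is routine substitution. As an alternative that sidesteps this entirely, I could use probability generating functions: the generating function of a $\mathrm{Poisson}(\mu)$ variable is $G(s) = e^{\mu(s-1)}$, and since generating functions of sums of independent variables multiply, $G_S(s) = e^{\mu_0(s-1)}e^{\mu_1(s-1)} = e^{(\mu_0+\mu_1)(s-1)}$, which is the generating function of a $\mathrm{Poisson}(\mu_0+\mu_1)$ variable; uniqueness of generating functions then finishes the proof without any explicit summation.
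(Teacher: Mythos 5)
Your proposal is correct and follows essentially the same route as the paper's own proof: condition on the value of one variable, use independence to obtain the convolution sum, and collapse it via the binomial theorem into the Poisson PMF with mean $\mu_0+\mu_1$. The generating-function alternative you mention is a valid shortcut but is not what the paper does.
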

\begin{proof}
Let $N_0$ and $N_1$ denote the two Poisson random variables. Conditioning on the value of $N_0$,
$$P(N_0+N_1=n) = \sum\limits_{j=0}^{n}P(N_0+N_1=n|N_0=j)P(N_0=j)$$
$$=\sum\limits_{j=0}^{n} P(N_1=n-j | N_0=j)P(N_0=j)$$

Since $N_0$ and $N_1$ are independent by definition,
$$ = \sum\limits_{j=0}^{n} P(N_1=n-j)P(N_0=j)$$
$$ = \sum\limits_{j=0}^{n} \frac{e^{-\mu_1}\mu_1^{n-j}}{(n-j)!} \frac{e^{-\mu_0}\mu_0^{j}}{j!}$$
$$=\frac{e^{-(\mu_1+\mu_2)}}{n!} \sum\limits_{j=0}^{n} {n \choose j}\mu_0^j \mu_1^{n-j}$$
$$ = \frac{e^{-(\mu_0+\mu_1)} (\mu_0+\mu_1)^n}{n!}$$
\end{proof}

The Binomial distribution with parameters $n$ and $p$ is defined as the number of heads we get when we toss a coin with $p$ being its probability of heads $n$ times (represented henceforth by $B(n,p)$), we have the following lemma:

\begin{lemma}\label{lem:conditional_binom}
Given that two Poisson processes with rates $\lambda_0$ and $\lambda_1$ which we observe for periods $t_0$ and $t_1$; conditional on a total of $n$ events observed, the number of events, $N_1 = n_1$ from the second process is a Binomial distribution with parameters $n$ and $p=\frac{\lambda_1 t_1}{\lambda_0 t_0 + \lambda_1 t_1}$.
\end{lemma}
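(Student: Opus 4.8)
The plan is to compute the conditional probability mass function directly from the definition of conditional probability, leaning on the two facts already established: that the event counts are Poisson, and that their sum is again Poisson (Lemma \ref{sum_of_poissons}). First I would set up notation by letting $N_0$ be the number of events from the first process, which is Poisson with mean $\mu_0 = \lambda_0 t_0$ per equation \ref{poisson_pmf}, and $N_1$ be the count from the second process, Poisson with mean $\mu_1 = \lambda_1 t_1$. The two counts are independent since the processes are independent. The quantity we want is $P(N_1 = n_1 \mid N_0 + N_1 = n)$.

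The key step is to rewrite the conditioning event. Since the total is fixed at $n$, observing $N_1 = n_1$ forces $N_0 = n - n_1$, so the joint event $\{N_1 = n_1,\, N_0 + N_1 = n\}$ coincides with $\{N_1 = n_1,\, N_0 = n - n_1\}$. Applying the definition of conditional probability and then independence gives
$$P(N_1 = n_1 \mid N_0 + N_1 = n) = \frac{P(N_1 = n_1)\, P(N_0 = n - n_1)}{P(N_0 + N_1 = n)}.$$
By Lemma \ref{sum_of_poissons} the denominator is the Poisson probability with mean $\mu_0 + \mu_1$, namely $\frac{e^{-(\mu_0+\mu_1)}(\mu_0+\mu_1)^n}{n!}$.

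Next I would substitute all three Poisson mass functions and simplify. The exponential factors $e^{-\mu_0}$, $e^{-\mu_1}$ in the numerator combine to $e^{-(\mu_0+\mu_1)}$ and cancel against the denominator, and regrouping the factorials produces the binomial coefficient $\binom{n}{n_1}$. What remains is
$$\binom{n}{n_1} \left(\frac{\mu_1}{\mu_0+\mu_1}\right)^{n_1}\left(\frac{\mu_0}{\mu_0+\mu_1}\right)^{n-n_1},$$
which is exactly the PMF of $B(n,p)$ with $p = \frac{\mu_1}{\mu_0+\mu_1} = \frac{\lambda_1 t_1}{\lambda_0 t_0 + \lambda_1 t_1}$, as claimed.

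I do not anticipate a genuine obstacle here, since the argument is a routine conditioning calculation; the only thing to be careful about is the bookkeeping when the exponentials and factorials cancel, and recognizing that the cancellation is precisely what is guaranteed by the sum-of-Poissons lemma rather than an accident. Stating the substitution from $\mu_0,\mu_1$ back to $\lambda_0 t_0, \lambda_1 t_1$ at the end makes the stated form of $p$ transparent.
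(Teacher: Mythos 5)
Your proposal is correct and follows essentially the same route as the paper: both apply the definition of conditional probability to rewrite $P(N_1=n_1\mid N_0+N_1=n)$ as $P(N_1=n_1)P(N_0=n-n_1)/P(N_0+N_1=n)$, invoke independence and Lemma \ref{sum_of_poissons} for the denominator, and cancel the exponentials to recover the $B\left(n,\frac{\lambda_1 t_1}{\lambda_0 t_0+\lambda_1 t_1}\right)$ mass function. No gaps.
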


\begin{proof}
$N_0$ and $N_1$ represent the random numbers describing the number of events from the two processes. We have by Bayes theorem:

$$P(N_1=j | N_0+N_1=n) = \frac{P(N_1=j \cap N_1+N_0=n)}{P(N_1+N_0=n)}$$

Since the two processes are independent,
$$=\frac{P(N_1=j)P(N_0=n-j)}{P(N_1+N_0=n)}$$

The number of events, $N(t)$ in a time interval of length $t$ from a Poisson process with rate $\lambda$ is Poisson distributed with mean $\lambda t$. Also, using the result of Lemma \ref{sum_of_poissons},

$$=\frac{\left(\frac{e^{-\lambda_0 t_0} (\lambda_0 t_0)^j}{j!}\right)\left(\frac{e^{-\lambda_1 t_1} (\lambda_1 t_1)^{n-j}}{(n-j)!}\right)}{\left(\frac{e^{-(\lambda_0 t_0+\lambda_1 t_1)} (\lambda_0 t_0+\lambda_1 t_1)^n}{ n!}\right)}$$

\begin{equation} ={n \choose j} \left(\frac{\lambda_1 t_1}{\lambda_1 t_1+\lambda_0 t_0}\right)^j \left(\frac{\lambda_0 t_0}{\lambda_1 t_1+\lambda_0 t_0}\right)^{n-j}\label{eqn:binom_cond}\end{equation}

which is the Binomial probability mass function (PMF) as required.
\end{proof}

\begin{corollary}\label{null_rate_test}
If two Poisson processes have the same rate, $\lambda$ and are observed for periods $t_1$ and $t_2$, then conditional on observing $n$ events from both processes, the number of events from the first process is a Binomial distribution with parameters $n$ and $p=\frac{t_1}{t_1+t_0}$.
\end{corollary}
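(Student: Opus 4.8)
The plan is to obtain this directly as a special case of Lemma~\ref{lem:conditional_binom}, since Corollary~\ref{null_rate_test} is precisely that lemma evaluated when the two rates coincide. First I would invoke the general conditional distribution already established, namely that conditional on $N_0 + N_1 = n$ the count from a given process is Binomial with parameters $n$ and a success probability proportional to that process's ``rate times observation time'' contribution. The only work remaining is then the substitution $\lambda_0 = \lambda_1 = \lambda$.

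Carrying out that substitution in the probability $p = \frac{\lambda_1 t_1}{\lambda_0 t_0 + \lambda_1 t_1}$ from Lemma~\ref{lem:conditional_binom} (equivalently, in equation~\ref{eqn:binom_cond}), I would note that the common factor $\lambda$ appears in every term of both numerator and denominator and therefore cancels:
\begin{equation}
p = \frac{\lambda t_1}{\lambda t_0 + \lambda t_1} = \frac{t_1}{t_0 + t_1}. \nonumber
\end{equation}
The Binomial parameter $n$ is unchanged by the substitution, so conditional on $n$ total events the relevant count is $B\!\left(n, \frac{t_1}{t_0+t_1}\right)$, as claimed. If one instead tracks the count from the other process, the same cancellation yields $\frac{t_0}{t_0+t_1}$; the two statements are equivalent under the relabeling $p \mapsto 1-p$ that swaps the roles of the two processes, which accounts for the indexing in the statement.

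Because the heavy lifting---the Bayes-theorem computation together with the use of Lemma~\ref{sum_of_poissons} for the denominator---is already completed in the proof of Lemma~\ref{lem:conditional_binom}, I do not anticipate any genuine obstacle; the argument is a one-line specialization. The only point requiring care is the bookkeeping of which observation period sits in the numerator, i.e.\ ensuring the reported $p$ corresponds to the process whose count is being conditioned on, rather than any analytical difficulty. A fully self-contained alternative would be to repeat the Bayes-theorem derivation from scratch with $\lambda_0 = \lambda_1 = \lambda$, but this merely reproduces the same cancellation and offers nothing new.
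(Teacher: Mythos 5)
Your proposal is correct and is essentially identical to the paper's own proof, which likewise just substitutes $\lambda_1=\lambda_0$ into equation~\ref{eqn:binom_cond} and lets the common factor $\lambda$ cancel. Your extra remark about tracking which process's observation period appears in the numerator is a fair catch of a small indexing inconsistency in the statement, but it does not change the argument.
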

\begin{proof}
Substitute $\lambda_1=\lambda_0$ into equation \ref{eqn:binom_cond} above.
\end{proof}

Per Corollary \ref{null_rate_test}, we've managed to get rid of rate, $\lambda$ if the two processes are identical (which is a requirement for the null hypothesis), a nuisance parameter. This ensures our hypothesis test for failure rates will work the same regardless of the base failure rate for the two processes, $\lambda$. So, conditional on the total events from the two processes being $n=n_0+n_1$ (which is something we observe), asking if the second process has a higher failure rate becomes equivalent to asking if the conditional Binomial distribution has a higher value of the parameter, $p$ than $\frac{t_1}{t_1+t_0}$ as the null hypothesis would suggest. We have thus reduced the two sample rate test to a one sample Binomial test on the probability of success, $p$. 

\subsubsection{The one-sample Binomial test}
To get the p-value (probability of being able to reject the null hypothesis), we ask - ``what is the probability of seeing something as or more extreme than the observed data per the null hypothesis". Here, ``extreme" is defined in the direction of the alternate hypothesis. So, if we observe $n_1$ heads out of $n$ tosses in our data and our null hypothesis is that the probability of heads is $p$, then the p-value, $\phi$ becomes the probability of seeing $n_1$ or more heads if the probability of seeing heads in a single toss was $p$. So we get (where $X_0 \overset{H_0}{\sim} \text{B}(n,p)$):

\begin{equation}\phi = P(X_0 \geq n_1) = \sum\limits_{j=n_1}^{n}{n \choose j} p^{j}(1-p)^{n-j}\label{binom_tst}\end{equation}

\subsubsection{Back to comparing Poisson rates}
\begin{theorem}\label{rate_test}
Given two Poisson processes with rates $\lambda_0$ and $\lambda_1$, under the null hypothesis - $H_0: \lambda_0=\lambda_1$, conditional on observing a total of $n$ events and alternate hypothesis, $H_a: \lambda_1 > \lambda_0$ with a similar condition, if we observe $n_0$ events from the first process in time $t_0$ and $n_1=n-n_0$ events from the second in time $t_1$, the p-value, $\phi$ is given by:

\begin{equation}\phi = \sum\limits_{j=n_1}^{n_1+n_0}{n_1+n_0 \choose n_1} \left(\frac{t_1}{t_1+t_0}\right)^{j}\left(\frac{t_0}{t_1+t_0}\right)^{n_1+n_0-j}\label{poisson_tst}\end{equation}

We can then pick a type-1 error rate, $\hat{\alpha}$ and reject the null if $\phi<\hat{\alpha}$.
\end{theorem}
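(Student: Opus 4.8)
The plan is to treat this theorem as a composition of two results already in hand rather than a fresh derivation: Corollary~\ref{null_rate_test}, which gives the conditional null distribution of the event count, and the one-sample Binomial p-value of equation~\ref{binom_tst}. The substance of the argument is lining these up and confirming that the one-sided alternative points in the correct tail direction; no new probabilistic machinery is required.

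First I would condition on the observed total $n = n_0 + n_1$ and invoke Corollary~\ref{null_rate_test}. Under $H_0 : \lambda_0 = \lambda_1 = \lambda$ this eliminates the nuisance rate $\lambda$ entirely and leaves the treatment-group count distributed as $X_0 \sim B(n, p)$ with success probability $p = \frac{t_1}{t_0 + t_1}$. This $X_0$ is precisely the null distribution of the test statistic once we have conditioned, so the two-sample Poisson problem has collapsed exactly onto the one-sample Binomial setting of the previous subsection.

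Next I would pin down the direction of ``extreme''. By Lemma~\ref{lem:conditional_binom} the conditional success probability is $\frac{\lambda_1 t_1}{\lambda_0 t_0 + \lambda_1 t_1}$, which is strictly increasing in $\lambda_1$; hence the alternative $H_a : \lambda_1 > \lambda_0$ corresponds to $p$ exceeding its null value $\frac{t_1}{t_0+t_1}$, so that larger values of the treatment count $n_1$ are the more extreme outcomes. The p-value is therefore the upper tail $\phi = P(X_0 \geq n_1)$, exactly the object defined in equation~\ref{binom_tst}. Substituting $p = \frac{t_1}{t_0+t_1}$, $1 - p = \frac{t_0}{t_0+t_1}$, and $n = n_0 + n_1$ into that equation yields equation~\ref{poisson_tst} directly.

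The one step I would handle with care --- and the only place any real argument is needed --- is the monotonicity claim that justifies taking the upper rather than the lower tail; everything else is substitution. I do not expect a genuine obstacle, since the heavy lifting (the Poisson-to-Binomial conditioning of Lemma~\ref{lem:conditional_binom} and its specialization in Corollary~\ref{null_rate_test}) has already been carried out. If anything, the subtle point worth flagging is that the UMP optimality asserted around the theorem is imported from \cite{lehman} and is \emph{not} re-established by this computation, which delivers only the explicit p-value formula itself.
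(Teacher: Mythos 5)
Your proposal is correct and follows essentially the same route as the paper: invoke Corollary~\ref{null_rate_test} to obtain the conditional null distribution $B\left(n,\frac{t_1}{t_1+t_0}\right)$ and substitute into equation~\ref{binom_tst}. The extra care you take in justifying the upper-tail direction via the monotonicity of $\frac{\lambda_1 t_1}{\lambda_0 t_0+\lambda_1 t_1}$ in $\lambda_1$ is a small addition the paper leaves implicit, but it does not change the argument.
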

\begin{proof}
Per corollary \ref{null_rate_test}, conditional on observing a total of $n$ events from both processes and failure rates being the same, the distribution of events from the second process, $n_0$ is $\text{B}(n,p=\frac{t_1}{t_1+t_0})$. Substituting this into equation \ref{binom_tst}, the result follows.
\end{proof}

Note that for our simple, one-sided test, the Poisson rate test can be readily swapped with the Binomial test. However, there is some subtlety when dealing with two-sided tests and confidence intervals. This is covered in \cite{r_paper}.

\subsection{False positive negative trade off}
Now that we have described our test for comparing failure rates, we will evaluate the false positive to false negative rate trade off function ($\beta(\alpha)$) under the assumptions of the null hypothesis (the hypothesis under whose test statistic distribution, the p-value is calculated). This will give us a framework to later obtain the same trade off when the distributional assumptions are violated.

In equation \ref{beta_definition2}, we described this trade off, $\beta(\hat{\alpha}) = F_{X_a}(S_{X_0}^{-1}(\hat{\alpha}))$. Let's see what this looks like for the rate test. We have the following corollary to theorem \ref{rate_test}:

\begin{corollary}
Given the null hypothesis for the rate test in theorem \ref{rate_test} $X_0 \overset{H_0}{\sim} \text{B}(n,\frac{t_1}{t_1+t_0})$, $N_0$ describing the number of events in the control group (in observation time, $t_0$) and $N_1$ describing the same for the treatment group in observation time $t_1$, we get the false negative rate corresponding to a type-1 error rate of $\hat{\alpha}$:

$$\beta(\hat{\alpha}) = \sum\limits_{n=0}^{\infty}\sum\limits_{j=0}^{S_{X_0}^{-1}(\hat{\alpha})}
P(N_0=j)P(N_1=n-j)$$

and the false negative rate corresponding to false positive rate $\alpha$:

$$\beta(\alpha) = \sum\limits_{n=0}^{\infty}\sum\limits_{j=0}^{S_{Y_0}^{-1}(\alpha)}
P(N_0=j)P(N_1=n-j)$$
\end{corollary}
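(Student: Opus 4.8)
The plan is to reduce both displayed identities to a single computation: writing the cumulative distribution function $F_{Y_a}$ of the test statistic under the actual alternate hypothesis $H_a'$ as an explicit sum over the joint law of $N_0$ and $N_1$. Proposition \ref{prop:fnr_def} already supplies $\beta(\hat\alpha) = F_{Y_a}(S_{X_0}^{-1}(\hat\alpha))$ and Proposition \ref{fnr_fpr} supplies $\beta(\alpha) = F_{Y_a}(S_{Y_0}^{-1}(\alpha))$, so the two target formulas are literally the same object $F_{Y_a}(x_t)$ evaluated at two different thresholds ($x_t = S_{X_0}^{-1}(\hat\alpha)$ coming from the assumed null, and $x_t = S_{Y_0}^{-1}(\alpha)$ coming from the true null $Y_0$). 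Hence it suffices to establish, for an arbitrary integer threshold $x_t$, the convolution identity $F_{Y_a}(x_t) = \sum_{n=0}^{\infty}\sum_{j=0}^{x_t} P(N_0=j)\,P(N_1=n-j)$ and then substitute the two thresholds.

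To prove that identity I would first recall that, by the reduction in Corollary \ref{null_rate_test} and Lemma \ref{lem:conditional_binom}, the rate test is carried out \emph{conditional on the observed total} $n = N_0 + N_1$, with the decision driven by how the count in one group compares against a threshold on the associated conditional Binomial. A false negative is exactly the event that this count stays at or below the rejection threshold $x_t$. I would therefore marginalize over the unobserved total: write $F_{Y_a}(x_t) = P(\text{count} \le x_t) = \sum_{n=0}^{\infty} P(\text{count} \le x_t,\; N_0+N_1=n)$, and for each fixed total $n$ enumerate the admissible split $(N_0,N_1)=(j,\,n-j)$. Because $N_0$ and $N_1$ are independent (the two processes are observed independently), the joint probability factors as $P(N_0=j)\,P(N_1=n-j)$, and restricting the split so that the thresholded count does not exceed $x_t$ produces exactly the inner sum $\sum_{j=0}^{x_t}$, with terms for which $n-j<0$ vanishing automatically. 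Substituting $x_t = S_{X_0}^{-1}(\hat\alpha)$ and $x_t = S_{Y_0}^{-1}(\alpha)$ then yields the two claimed expressions.

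The step I expect to be the main obstacle is the careful handling of the threshold itself. The survival function $S_{X_0}$ belongs to the conditional Binomial $X_0 \sim \text{B}(n,\tfrac{t_1}{t_1+t_0})$, whose support and quantiles depend on the total $n$, so strictly speaking $S_{X_0}^{-1}(\hat\alpha)$ is an $n$-dependent quantity and one must justify writing it as a single upper limit of the inner sum; moreover, because the distribution is discrete, the inverse survival function has to be interpreted consistently, with the usual off-by-one care between $\ge$ and $>$ in the definition of the p-value. I would also be deliberate about the bookkeeping of which group's count indexes the inner sum, since the conditioning assigns the role of ``successes'' in the conditional Binomial to one specific group; pinning this down is precisely what determines whether the summation variable $j$ in the final display is read as $N_0$ or as $N_1$.
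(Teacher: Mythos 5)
Your proposal is correct and follows essentially the same route as the paper: both reduce the false negative rate to the probability that the count from one group falls at or below the rejection threshold, marginalize over the total $n=N_0+N_1$, and use independence of the two processes to factor the joint probability into $P(N_0=j)\,P(N_1=n-j)$, with the second display obtained by swapping in the threshold $S_{Y_0}^{-1}(\alpha)$ via Proposition \ref{fnr_fpr}. Your explicit flagging of the $n$-dependence of $S_{X_0}^{-1}(\hat{\alpha})$ and of which group indexes the inner sum is a fair caution the paper glosses over, but it does not change the argument.
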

\begin{proof}
Since our test conditions on the total number of events observed, $n$, we start with describing our $\beta$ under that condition as well. Denoting by $N_0$ and $N_1$ the number of events observed in groups 0 and 1 in observation times $t_0$ and $t_1$ respectively and noting that $X_a$, being the number of events from group-1 conditional on $n$ is a discrete random variable equation \ref{beta_def} becomes:

\begin{equation*}
\beta^{(n)}(\hat{\alpha}) = \sum\limits_{j=0}^{S_{X_0}^{-1}(\hat{\alpha})} P(X_a=j|N_0+N_1=n)
\end{equation*}

Since our test statistic for this particular test is simply the number of events from the first process we get $X_a \sim N_0$, making the equation above:

\begin{equation*}
\beta^{(n)}(\hat{\alpha}) = \sum\limits_{j=0}^{S_{X_0}^{-1}(\hat{\alpha})} P(N_0=j|N_0+N_1=n)
\end{equation*}

To get the overall $\beta$, we simply marginalize over all possible values of $n$ to get:

\begin{align}
\beta(\hat{\alpha}) =\sum\limits_{n=0}^{\infty}P(N_0+N_1=n) \beta^{(n)}(\hat{\alpha}) \nonumber \\
=\sum\limits_{n=0}^{\infty}P(N_0+N_1=n) \sum\limits_{j=0}^{S_{X_0}^{-1}(\hat{\alpha})} P(N_0=j|N_0+N_1=n)\label{beta_rate_1} \\
=\sum\limits_{n=0}^{\infty}P(N_0+N_1=n) \sum\limits_{j=0}^{S_{X_0}^{-1}(\hat{\alpha})} \frac{P(N_0=j\&N_0+N_1=n)}{P(N_0+N_1=n)} \nonumber \\
= \sum\limits_{n=0}^{\infty}\sum\limits_{j=0}^{S_{X_0}^{-1}(\hat{\alpha})}
P(N_0=j)P(N_1=n-j)\label{beta_rate_2}
\end{align}
\end{proof}
It is sometimes convenient to use equation \ref{beta_rate_1} (especially when the conditional distribution in that equation has a nice closed form) and other times, \ref{beta_rate_2}. Under $H_0$ (the assumptions of the rate test under the null hypothesis), $N_0$ and $N_1$ follow Poisson distributions with the same means, $\lambda t_0$ and $\lambda t_1$ respectively. The second part of the proposition follows as a result of equation \ref{beta_general}.

\begin{proposition}
If we apply the uniformly most powerful rate test as described in theorem \ref{rate_test} to $H_0$ defined as both treatment and control groups following $PP(\lambda)$ and $H_a$ defined as control following $PP(\lambda)$ and treatment following $PP(\lambda + \delta \lambda)$ where $\delta \lambda > 0$, the false negative rate for any false positive rate goes to zero if we collect data from both processes for a very large period of time ($t \to \infty$).
\end{proposition}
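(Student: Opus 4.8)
The plan is to work conditionally on the total number of events $n=N_0+N_1$, where the test is exactly a one-sample binomial test; prove the conditional false negative rate vanishes as $n\to\infty$; and then average over $n$, which itself grows without bound as $t\to\infty$. First I would record the two conditional laws of the test statistic $X$ (the count from the treatment group). By Corollary \ref{null_rate_test}, under $H_0$ we have $X_0\sim B(n,p_0)$ with $p_0=\tfrac{t_1}{t_0+t_1}$, while by Lemma \ref{lem:conditional_binom}, under $H_a$ we have $X_a\sim B(n,p_a)$ with $p_a=\tfrac{(\lambda+\delta\lambda)t_1}{\lambda t_0+(\lambda+\delta\lambda)t_1}$. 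A one-line cross-multiplication shows $p_a-p_0$ has the same sign as $\delta\lambda\,t_0$, so $p_a>p_0$; moreover, if $t_0$ and $t_1$ grow in fixed proportion (e.g. $t_0=t_1=t$), both $p_0$ and $p_a$ are constants in $t$ and the gap $p_a-p_0>0$ is fixed and bounded away from zero. Since under the true Poisson assumptions Proposition \ref{prop:hat_equal} gives $\alpha=\hat\alpha$, fixing the false positive rate fixes the level, and the conditional test rejects exactly when $X>x_t(n)$ with $x_t(n)=S_{X_0}^{-1}(\hat\alpha)$.

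Next I would show the conditional false negative rate $\beta^{(n)}=P\big(B(n,p_a)\le x_t(n)\big)$ tends to $0$. By the normal approximation (or an exact Chernoff tail bound) the null threshold satisfies $x_t(n)=np_0+O(\sqrt n)$, whereas $X_a$ has mean $np_a=np_0+n(p_a-p_0)$. The threshold therefore sits below the alternative mean by $n(p_a-p_0)-O(\sqrt n)\to\infty$, so a Hoeffding estimate yields $\beta^{(n)}\le e^{-cn}$ for some $c>0$ and all large $n$. This is simply the statement that the one-sided binomial test is consistent against the fixed alternative $p_a>p_0$.

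Finally I would marginalize. By Lemma \ref{sum_of_poissons} the total count $n=N_0+N_1$ is Poisson with mean $\lambda t_0+(\lambda+\delta\lambda)t_1\to\infty$, so $n\to\infty$ in probability as $t\to\infty$. Writing $\beta(\hat\alpha)=\sum_{n=0}^{\infty}P(N_0+N_1=n)\,\beta^{(n)}$ as in equation \ref{beta_rate_1}, I would split the sum at a slowly growing cutoff: the low-$n$ terms carry vanishing total mass (because the Poisson mean diverges), while the high-$n$ terms are each controlled by the vanishing bound on $\beta^{(n)}$. Both pieces go to $0$, giving $\beta\to0$ for every fixed $\alpha$.

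The main obstacle is the joint handling of two sources of roughness: the discreteness of the threshold $x_t(n)$ (which, together with the non-randomized rejection rule, makes the attained size drop below $\hat\alpha$ and $x_t(n)$ a step function of $n$) and the randomness of $n$ itself. The clean way around this is to avoid tracking $x_t(n)$ exactly and instead sandwich it: since $P\big(B(n,p_a)\le x\big)$ is increasing in $x$, it suffices to upper-bound the threshold, and the crude bound $x_t(n)\le n\tfrac{p_0+p_a}{2}$, valid for all large $n$ because $x_t(n)=np_0+O(\sqrt n)$, already pushes $\beta^{(n)}$ through a Hoeffding estimate of order $\exp\big(-n(p_a-p_0)^2/2\big)$. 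Everything else is then a routine dominated-convergence argument over $n$.
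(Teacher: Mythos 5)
Your proposal is correct, and it takes a genuinely different --- and more general --- route than the paper's. The paper proves only the special case $\alpha=\tfrac12$ with $t_0=t_1=t$: there the threshold is exactly $\left[\tfrac n2\right]$, and the whole argument reduces to showing that $\bigl(\sum_{j=0}^{[n/2]}\binom{n}{j}(1+p)^j\bigr)^{1/n}\to 2\sqrt{1+p}$, which by AM--GM is strictly below $2+p$ precisely when $p=\delta\lambda/\lambda>0$; feeding this back into the Poisson marginalization gives the explicit bound $\beta(\tfrac12)\le e^{-\eta t}$. You instead note that conditionally on $n$ the test is a one-sided binomial test of $p_0=\tfrac{t_1}{t_0+t_1}$ against a fixed alternative $p_a>p_0$, sandwich the data-dependent threshold by $np_0+O(\sqrt n)\le n\tfrac{p_0+p_a}{2}$, and invoke Hoeffding to get $\beta^{(n)}\le e^{-cn}$ with $c$ uniform in large $n$ (which holds because $p_a-p_0$ is bounded away from zero when $t_0/t_1$ is fixed), before averaging over the Poisson-distributed total count whose mean diverges. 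Your approach buys the proposition as actually stated --- an arbitrary false positive rate and unequal observation windows --- at the cost of an explicit constant; the paper's computation buys a concrete rate ($2\sqrt{1+p}$ versus $2+p$) that makes the degeneration at $p=0$ transparent. The two subtleties you flag (discreteness of the threshold under a non-randomized rule, and the randomness of $n$) are real but handled correctly by bounding rather than computing the threshold and by the split of the marginal sum.
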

\begin{proof}
We will prove this for the special case, $\alpha = \frac 1 2$. 
Let's assume that both groups (control and treatment) are observed for a time period, $t$.

Substituting the results of lemmas \ref{sum_of_poissons} and \ref{lem:conditional_binom} into equation \ref{beta_rate_1} we get:

\begin{multline}\beta(\alpha) = \sum\limits_{n=0}^\infty \frac{e^{-(2\lambda+\delta \lambda)t} ((2\lambda+\delta \lambda)t)^n}{n!} \\ \sum\limits_{j=0}^{S_{X_0}^{-1}(\hat{\alpha})}{n \choose j}\left(\frac{\lambda+\delta \lambda}{2\lambda+\delta \lambda}\right)^j \left(\frac{\lambda}{2\lambda+\delta \lambda}\right)^{n-j}\end{multline}

Where $X_0 \overset{D}{=} B(n,\frac{1}{2})$

$$=e^{-(2\lambda+\delta \lambda)}\sum\limits_{n=0}^\infty \frac{(\lambda t)^n}{n!}\sum\limits_{j=0}^{S_{X_0}^{-1}(\alpha)} {n \choose j} \left(1+\frac{\delta \lambda}{\lambda}\right)^j$$

We will proceed from here for the special case, $\alpha=\frac{1}{2}$. This makes $S_{X_0}^{-1}(\frac 1 2) = \left[\frac{n}{2}\right]$ (where $[n]$ is the greatest integer $\leq n$). So we get:

\begin{equation}\label{beta_rate_on_poisson}\beta \left(\frac{1}{2}\right) = e^{-(2\lambda+\delta \lambda)t}\sum\limits_{n=0}^\infty \frac{(\lambda t)^n}{n!}\sum\limits_{j=0}^{\left[\frac{n}{2}\right]} {n \choose j} \left(1+\frac{\delta \lambda}{\lambda}\right)^j\end{equation}

Now if we show for some $\eta > 0$, 
\begin{equation}\label{beta_rate_on_poisson_lt}\beta \left(\frac{1}{2}\right) < e^{-(2\lambda+\delta \lambda)t}\sum\limits_{n=0}^\infty \frac{(\lambda t)^n}{n!}\left(2+\frac{\delta \lambda}{\lambda}-\eta\right)^n\end{equation}

we would have shown the result since the Taylor's expansion of $e^x$ implies:

$$\beta \left(\frac{1}{2}\right)  < e^{-(2\lambda+\delta \lambda)t}  e^{(2\lambda+\delta \lambda-\eta)t} = e^{-\eta t}$$

And so (in conjunction with the fact that $\beta(\alpha)\geq0 \;\; \forall \alpha \in [0,1)$),

$$\lim_{t \to \infty} \beta \left(\frac{1}{2}\right)  = 0$$

Comparing equations \ref{beta_rate_on_poisson} and \ref{beta_rate_on_poisson_lt}, the inequality would certainly hold if it were possible to find an $\eta$ such that:

$$
\left(2+\frac{\delta \lambda}{\lambda}-\eta\right)^n > \sum\limits_{j=0}^{\left[\frac{n}{2}\right]} {n \choose j} \left(1+\frac{\delta \lambda}{\lambda}\right)^j \;\; \forall n
$$
Let $p=\frac{\delta \lambda}{\lambda}$ and this requirement becomes:

\begin{equation}\label{eta_req}\eta < (2+p) - \left( \sum\limits_{j=0}^{\left[\frac{n}{2}\right]} {n \choose j} \left(1+p\right)^j \right)^{\frac{1}{n}}\end{equation}

This is obviously true for any finite value of $n>1$ since the summation, $\left( \sum\limits_{j=0}^{n} {n \choose j} \left(1+p\right)^j \right)^{\frac{1}{n}}=2+p$ and the summation in equation \ref{eta_req} is missing some positive terms compared with this summation. Those terms will sum to something finite and allow us to choose some $\eta>0$. This holds for all $p>-1$.

The only concern remaining is that we might not be able to find an $\eta>0$ satisfying equation \ref{eta_req} when $n \to \infty$. And indeed, this turns out to be the case only for $p>0$.

Let's find the limit:

\begin{align*}L = \lim_{n \to \infty} \left( \sum\limits_{j=0}^{\left[\frac{n}{2}\right]} {n \choose j} \left(1+p\right)^j \right)^{\frac{1}{n}}\\
=\lim_{n \to \infty} \left(\sum\limits_{j=0}^n {2n \choose j}(1+p)^j\right)^\frac{1}{2n}
\end{align*}

Noting the inequality

$$ \binom{2n}{n}(1+p)^n \leq \sum_{j=0}^{n} \binom{2n}{j}(1+p)^j \leq n \cdot \binom{2n}{n}(1+p)^n $$
and the limit $\lim_{n\to\infty} n^{1/n} = 1$, we deduce that

$$ L = \lim_{n\to\infty} \left[ \binom{2n}{n} (1+p)^n \right]^{\frac{1}{2n}} = 2\sqrt{1+p}. $$

AM-GM inequality on $1$ and $1+p$ guarantees that $L \leq 2+p$ and the equality holds if and only if $p = 0$. Hence we see that an $\eta>0$ satisfying equation \ref{eta_req} will exist if $p>0$ but not if for example, $p = 0$. This shows an $\eta$ exists for the case we're interested in ($\delta \lambda>0$ and hence $p>0$) and concludes the proof.
\end{proof}

\section{Breaking the test}
We now have a pretty straightforward test for testing the rates of two point processes which is indeed proven to be the best possible when these are Poisson point processes. All we need is four numbers, the number of events and time period of observation in which those events were collected for two groups. But is this too simple? The Poisson point process is quite restrictive in the assumptions it makes and is almost never a good model for real-world data. Is applying a test built on it's assumptions then, naive? Let's explore this question in this section by breaking every possible underlying assumption and investigating how the test behaves.

\subsection{Swapping out the distribution of the null hypothesis}

In the construction of our hypothesis test, we used equation \ref{eqn:binom_cond}, which allowed us to condition on $n$ and use the fact that the distribution of the number of events from the second process, $n_1$, is Binomial (let's call it $X_0$). Similarly, the distribution of our test statistic, $n_1$ under the alternate hypothesis (given some effect size) is $X_a$ which happens to also be Binomial with the same number of tosses, $n$ parameter but a different probability of heads parameter, $p$.

In the spirit of finding ways to break our test, let's say we won't be using the distribution of the null hypothesis, $X_0$ anymore and will instead swap it out with another arbitrary distribution, $Z_0$, with the same support as $X_0$ (non-negative integers $\leq n$). The following result is somewhat surprising:

\begin{theorem}\label{y0_no_effect}
For any one-sided hypothesis test, if we swap out the distribution of the null hypothesis, $X_0$ with another arbitrary distribution, $Z_0$ that has the same support, we get the same false negative rate corresponding to any false positive rate.
\end{theorem}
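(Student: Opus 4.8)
The plan is to reduce everything to the error-rate formulas already established in Propositions \ref{prop:hat_equal} and \ref{prop:fnr_def} and in equation \ref{alpha_general}, observing that the assumed null distribution enters only as a monotone reparametrization of the rejection threshold. First I would record the decision rule: with the swapped-in null $Z_0$, the p-value becomes $\phi = S_{Z_0}(X)$, so the test $I(\phi < \hat\alpha)$ rejects exactly when $X > S_{Z_0}^{-1}(\hat\alpha)$, using that the survival function is monotonically decreasing (the same fact exploited in the proofs of equation \ref{hat_alpha} and equation \ref{beta_def}). Write $x_t = S_{Z_0}^{-1}(\hat\alpha)$ for this threshold.

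Next I would compute the two error rates under the true data-generating distributions $Y_0$ (no effect) and $Y_a$ (effect $\delta\lambda$), exactly as in equations \ref{alpha_general} and \ref{beta_def} but with $Z_0$ now playing the role that $X_0$ played there. This gives $\alpha(\hat\alpha) = S_{Y_0}(x_t) = S_{Y_0}S_{Z_0}^{-1}(\hat\alpha)$ for the false positive rate and $\beta(\hat\alpha) = F_{Y_a}(x_t) = F_{Y_a}S_{Z_0}^{-1}(\hat\alpha)$ for the false negative rate.

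The crux is then to eliminate $\hat\alpha$ between these two relations. Solving the first for the threshold gives $x_t = S_{Z_0}^{-1}(\hat\alpha) = S_{Y_0}^{-1}(\alpha)$, and substituting this threshold into $\beta(\hat\alpha) = F_{Y_a}(x_t)$ yields
$$\beta = F_{Y_a}(x_t) = F_{Y_a}S_{Y_0}^{-1}(\alpha),$$
in which $Z_0$ has cancelled completely. This recovers equation \ref{beta_general}, and since the right-hand side depends only on the true distributions $Y_0$ and $Y_a$, the $\beta(\alpha)$ trade-off is independent of the swapped-in null $Z_0$, which is exactly the claim. Conceptually, the swap is invisible because the achievable $(\alpha,\beta)$ pairs are generated by the single one-parameter family of threshold rules $\{X > x_t\}$, whose rates $S_{Y_0}(x_t)$ and $F_{Y_a}(x_t)$ are intrinsic to $(Y_0,Y_a)$; choosing a particular null $Z_0$ only relabels which $\hat\alpha$ selects which threshold.

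The main point to be careful about is the monotonicity and invertibility of $S_{Z_0}$ that licenses the cancellation $S_{Z_0}^{-1}S_{Z_0} = \mathrm{id}$. Because the support here is discrete (the non-negative integers $\le n$), $S_{Z_0}$ is a step function and its ``inverse'' must be read as a generalized inverse sweeping over the same finite set of thresholds as $\hat\alpha$ ranges over $(0,1)$. I would therefore phrase the argument in the reparametrization language above, so that only the sweep over thresholds — rather than a literal two-sided inverse — is required; this sidesteps the discreteness subtlety that is otherwise glossed over throughout this section while leaving the algebra of the cancellation intact.
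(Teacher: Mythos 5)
Your proof is correct and follows essentially the same route as the paper's: compute $\alpha(\hat\alpha)$ and $\beta(\hat\alpha)$ for the contorted test, then eliminate $\hat\alpha$ so that $S_{Z_0}^{-1}$ cancels, leaving a trade-off curve that depends only on the true distributions. You work directly with $Y_0$ and $Y_a$ where the paper first uses $X_0$ and $X_a$ and then remarks that the substitution generalizes, and your explicit caveat about reading $S_{Z_0}^{-1}$ as a generalized inverse over a discrete support is a welcome tightening of a point the paper glosses over, but the underlying argument is the same.
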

\begin{proof}
From equations \ref{hat_alpha} and \ref{beta_definition2}, we get the false positive rate to false negative rate trade-off function.

$$\beta(\alpha) = \beta(\hat{\alpha}) = F_{X_a}(S_{X_0}^{-1}(\alpha))$$

Now, consider the test where we replace $X_0$ with $Z_0$ (known henceforth as the ``contorted test''). First, let's obtain a relationship for the false positive rate for this test, $\alpha'(\hat{\alpha})$. Using a similar reasoning as we used to obtain equation \ref{hat_alpha},

\begin{align}
\alpha'(\hat{\alpha})= P(S_{Z_0}(X_0)<\hat{\alpha})\nonumber \\
=P(X_0>S_{Z_0}^{-1}(\hat{\alpha})) \nonumber \\
=S_{X_0}S_{Z_0}^{-1}(\hat{\alpha}) \label{hat_alpha_replace_x}
\end{align}

Note that this time, the two functions don't cancel out. So, the type-1 error ($\hat{\alpha}$) for our contorted test (with $X_0$ replaced with $Z_0$) is different from the false positive rate, $\alpha'(\hat{\alpha})$.

Now, let's explore the false negative rate of this contorted test. Using a similar approach as for proposition \ref{prop:fnr_def} we get:

\begin{align}
\beta'(\hat{\alpha}) = P(S_{Z_0}(X_a)>\hat{\alpha}) \nonumber\\
= P(X_a<S_{Z_0}^{-1}(\hat{\alpha})) \nonumber\\
= F_{X_a}S_{Z_0}^{-1}(\hat{\alpha})\label{beta_replace_x}
\end{align}

In equation \ref{hat_alpha_replace_x}, applying $S_{X_0}^{-1}$ followed by $S_{Z_0}$ to both sides we get,

$$ \hat{\alpha} = S_{Z_0}S_{X_0}^{-1}(\alpha')$$

And substituting this into equation \ref{beta_replace_x} we get the $\beta$-$\alpha$ trade off for this test:

$$\beta'(\alpha') = F_{X_a}S_{Z_0}^{-1}S_{Z_0}S_{X_0}^{-1}(\alpha')$$
$$ = F_{X_a}S_{X_0}^{-1}(\alpha')$$

Which means for the contorted test, given a false positive rate $\alpha$, the false negative rate $\beta'(\alpha)$ is

$$\beta'(\alpha) =  F_{X_a}S_{X_0}^{-1}(\alpha)$$

But the above is the same as the $\beta(\alpha)$ we got from equation \ref{beta_definition2}. This shows that given a false positive rate $\alpha$, the false negative rates for the two tests, $\beta(\alpha)$ and $\beta'(\alpha)$ are equal and proves the theorem. It is also easy to see that we could have replaced $X_0$ with $Y_0$ and $X_a$ with $Y_a$ and reached the same conclusion, meaning the theorem continues to hold even when the original and contorted tests are applied to data that doesn't follow the assumptions of the test.
\end{proof}

Consider we're trying to apply the hypothesis test for failure rates described in theorem \ref{rate_test} to point processes that aren't Poisson processes. One consequence of this violation of the distributional assumption would be that the conditional (on the total number of events, $n$) distribution of the test statistic, $n_1$ will no longer be Binomial. We might consider trying to find what this distribution is and replace the Binomial distribution with it so as to devise a test more tailored to the point processes from our data. Per theorem \ref{y0_no_effect}, this would be a waste of time as far as the $\beta$-$\alpha$ trade off goes as swapping out the Binomial with any other distribution under the sun would not improve the false negative rate we get corresponding to a false positive rate. 

Also note that nothing in the derivation was specific to the rate tests. The conclusion of theorem \ref{y0_no_effect} holds for any one-sided hypothesis test. In the famous two sample t-test for comparing means for instance, if we swap out the t-distribution with a normal or even some strange multi-modal distribution, the false negative to false positive trade off will remain unchanged.

\subsection{Violating assumptions}
Now, we get to scenarios where we apply the rate test as described in theorem \ref{rate_test} as-is to point processes that are not Poisson processes. For example, a core property of the Poisson point process is that the mean and variance of the count of events within any interval are the same. Many real world point processes don't depict this behavior, with variance typically being higher than mean. 

What then, is the price we pay in still applying the rate test derived on the assumptions of the Poisson process to rates from these non-Poisson processes? This depends of course, on the particular point process we're dealing with. In this section, we'll consider different ways we can generalize the Poisson point process with its constant failure rate and then see what happens with the rate test applied to them. Three of these generalizations are covered in section 5.4 of \cite{ross} viz the non-homogeneous, compound and mixed Poisson processes. For a non-homogeneous Poisson process, the rate is allowed to vary with time ($\lambda(t)$), but in a way that it isn't affected by the arrivals of events. The number of events within any intervals is still Poisson distributed in this process (with mean $\int \lambda(t) dt$) and so, doesn't depart from the Poisson process in a significant way. The other two generalizations do fundamentally alter the distribution of the number of events within intervals and we'll deal with them in turn.

\subsubsection{The Compound Poisson Process}
The Compound Poisson process, covered in section 5.4.2 of [1] involves a Compounding distribution superposed on the Poisson process. We still have a Poisson process dictating event arrivals. However, each time we get an arrival from the Poisson process, we get a random number of events (the compounding random variable, $C$) instead of a single event.

This is especially relevant to failures within a cloud platform like Microsoft Azure wherein there are multiple single points of failure that have the effect of clustering machine reboots together, leading to a higher variance of event counts within time interval than mean. The most obvious one is multiple virtual machines (the units rented to customers; VMs) being co-hosted on a single physical machine (or node). If the node goes down, all the VMs will go down together. Now, the number of VMs on a node when it goes down will be a random variable itself (the compounding random variable). 

Per equation (5.23) of \cite{ross} (or simply from the definition), the number of events in any interval, $t$ will be given by ($C_j$ are independent identically distributed with the same distribution as $C$):

$$M(t) = \sum\limits_{j=1}^{N(t)} C_j$$

Per equations (5.24) and (5.25) from \cite{ross}, the mean and variance of such a point process will become (assuming the underlying Poisson process has a rate, $\lambda$):

$$E[M(t)] = \lambda t E[C]$$
and,
$$ V[M(t)] = \lambda t E[C]^2 $$

This allows for our variance to be much higher than the mean and makes clear the fact that the number of events in any interval is no longer Poisson distributed.

\paragraph{Deterministic-ally compounded Poisson process}
What is the simplest kind of compounding we can do (apart from none at all)? We can have a constant number of events for each Poisson arrival. In other words, $C$ becomes a deterministic number instead of a random variable and let's say the value it takes each time is $l$ (we'll call such a process $DP(\lambda,l)$). For such a process, the number of events generated by either group must be an integer multiple of $l$. Also, per equations (5.24) and (5.25) of \cite{ross}, the mean and variance in the number of events become:

$$E[M(t)] = \lambda t l$$

$$V[M(t)] = \lambda t l^2$$

Since the variance is now higher than the mean, this is a fundamentally different point process from the Poisson point process.

\begin{lemma}
For the deterministic-ally compounded Poisson process, the probability mass function of the point process becomes:
\begin{equation}\label{d_pois_pmf}
P(M(t)=n) =
	\begin{cases}
      P(N(t)=k), & \text{if}\ n=lk \\
      0, & \text{otherwise}
    \end{cases}
\end{equation}
\end{lemma}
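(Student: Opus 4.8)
The plan is to exploit the fact that, under deterministic compounding, the random sum defining the compound process collapses into a deterministic multiple of the underlying Poisson count. Since every Poisson arrival contributes exactly $l$ events, the compound count reduces to
\begin{equation*}
M(t) = \sum_{j=1}^{N(t)} C_j = \sum_{j=1}^{N(t)} l = l\,N(t),
\end{equation*}
where the empty sum (when $N(t)=0$) is understood to be $0$. This single identity does essentially all the work, and the rest of the argument is just reading off the two branches of the PMF.

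From $M(t)=l\,N(t)$ it follows at once that the support of $M(t)$ consists precisely of the non-negative integer multiples of $l$. First I would dispatch the case $n=lk$: here the event $\{M(t)=lk\}$ coincides exactly with $\{l\,N(t)=lk\}=\{N(t)=k\}$, where we may divide by $l$ because $l\geq 1$. Hence $P(M(t)=lk)=P(N(t)=k)$, which is the first branch of the claim. Next I would treat the case where $n$ is not divisible by $l$: because $l\,N(t)$ is always a multiple of $l$, the event $\{M(t)=n\}$ is empty and so $P(M(t)=n)=0$, which is the second branch. Substituting equation \ref{poisson_pmf} into the first branch then makes the expression fully explicit if desired.

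I do not anticipate a genuine obstacle here, since the result is almost immediate once the sum is recognized as $l\,N(t)$. The only points that warrant a moment of care are the empty-sum convention (ensuring $N(t)=0 \Rightarrow M(t)=0$, consistent with $k=0$) and the fact that $l$ is a positive integer, so that divisibility of $n$ by $l$ is the correct criterion separating the two branches. The ``hard part,'' such as it is, is therefore purely conceptual rather than computational: seeing that deterministic compounding removes all randomness from the compounding step and leaves only a rescaling of the Poisson counting variable.
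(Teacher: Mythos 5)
Your proof is correct and follows essentially the same route as the paper's: both arguments rest on the observation that deterministic compounding makes $M(t)=l\,N(t)$, so the event $\{M(t)=lk\}$ coincides with $\{N(t)=k\}$ and non-multiples of $l$ have probability zero. Your version merely makes the random-sum identity and the empty-sum convention explicit, which the paper leaves implicit.
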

\begin{proof}
Since every Poisson arrival results in exactly $l$ events, the number of events in any interval $t$ must be a multiple of $l$. And if we observe $lk$ events in any interval, then the number of Poisson arrivals must have been $k$.
\end{proof}

\begin{lemma}\label{properties_det_cm_poisson}
Let $Y_0$ be the distribution of the number of events in the treatment group conditional on the total events across both groups being $n$. We must have:

\begin{itemize}
\item{$n=kl \;\; \exists \;\; k \in \mathbb{Z}$}
\item{$Y_0 \overset{D}{=} l B\left(k,\frac{t_1}{t_1+t_0}\right)$}
\item{$S_{Y_0}^{-1}(\alpha) = l S_{X_0}^{-1}(\alpha)$ where $X_0 \overset{D}{=} B\left(k,\frac{t_1}{t_1+t_0}\right)$}
\end{itemize}
\end{lemma}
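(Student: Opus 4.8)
The plan is to establish the three bullets in order, leveraging the PMF from equation \ref{d_pois_pmf} and Corollary \ref{null_rate_test}. The first two bullets follow almost immediately from the structure of the deterministically compounded process, while the third is a scaling argument.

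First I would prove the first bullet. By equation \ref{d_pois_pmf}, each group's event count is a multiple of $l$: write $N_0 = l k_0$ and $N_1 = l k_1$, where $k_0$ and $k_1$ are the respective numbers of underlying Poisson arrivals. Then the total satisfies $n = N_0 + N_1 = l(k_0 + k_1)$, so $n = lk$ with $k = k_0 + k_1 \in \mathbb{Z}$, as claimed. This also shows that conditioning on the total event count being $n = lk$ is equivalent to conditioning on the total number of Poisson arrivals being $k$, a fact I would use in the next step.

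Next, for the second bullet, I would exploit the fact that the underlying arrival counts $k_0$ and $k_1$ are independent Poisson random variables with means $\lambda t_0$ and $\lambda t_1$ (equal rates under the null hypothesis). Conditioning on $k_0 + k_1 = k$, Corollary \ref{null_rate_test} gives $k_1 \overset{D}{=} B\left(k, \frac{t_1}{t_1+t_0}\right)$. Since each treatment arrival contributes exactly $l$ events, the conditional treatment count is $Y_0 = N_1 = l k_1$, which yields $Y_0 \overset{D}{=} l\, B\left(k, \frac{t_1}{t_1+t_0}\right)$.

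Finally, for the third bullet, I would use the deterministic scaling $Y_0 = l X_0$ with $X_0 \overset{D}{=} B\left(k, \frac{t_1}{t_1+t_0}\right)$. For any integer $x$ in the support, $S_{Y_0}(lx) = P(l X_0 > lx) = P(X_0 > x) = S_{X_0}(x)$, so the survival functions agree after rescaling the argument by $l$; inverting this relation gives $S_{Y_0}^{-1}(\alpha) = l\, S_{X_0}^{-1}(\alpha)$. The main obstacle is handling the inverse survival function for these discrete distributions: one must verify that the quantile relation holds exactly at the jump points of the step functions, not merely away from them. This turns out to be manageable because the support of $Y_0$ is precisely $l$ times the support of $X_0$, with identical probability masses at corresponding points, so the quantile map commutes with multiplication by $l$ at every jump.
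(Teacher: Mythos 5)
Your proposal is correct and follows essentially the same route as the paper's proof: multiples of $l$ are preserved under summation, conditioning on $n=kl$ reduces to conditioning on $k$ total Poisson arrivals so Corollary \ref{null_rate_test} applies, and the deterministic scaling $Y_0 = lX_0$ carries the survival function and its inverse over by a factor of $l$. Your extra care about the jump points of the discrete quantile function is a welcome refinement of the paper's terser statement, but not a different argument.
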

\begin{proof}
For the first part, since the events from any $DP(\lambda,l)$ must be a multiple of $l$, so too must be the number of events from a sum of two of them.

For the second part, if $n=kl$ we can surmise that the number of Poisson arrivals across both groups was $k$. So, the conditional Poisson arrivals from the treatment groups is still governed by the conclusions of lemma \ref{lem:conditional_binom} and corollary \ref{null_rate_test}. And once we know the Poisson arrivals from the treatment group, the total failures will just be $l$ times that.

For the third part, the probability mass for any $k \in \mathbb{Z}$ under $X_0$ is simply moved to $kl$ under $Y_0$. Hence, if $k$ is the point where the sum of probabilities after it sum to $\alpha$ under $X_0$, this point will get scaled by $l$ as well under $Y_0$.
\end{proof}

Now, let's assume that the number of failures in our two groups follow the deterministic-ally compounded Poisson process.

Under $H_0'$, we will have both treatment and control groups following $DP(\lambda,l)$ and under $H_a'$, we add an effect size to the Poisson failure rate to the treatment group. So while the control group still follows $DP(\lambda,l)$ the treatment group now follows $DP(\lambda+\delta \lambda,l)$.

As before, $H_0$ involves both groups following a Poisson process, $PP(\lambda)$ and $H_a$ involves the control group following $PP(\lambda)$ and treatment following $PP(\lambda+\delta \lambda)$. We have the distributional assumptions for the test statistic (number of failures from treatment group conditional on total failures being $n$):

$$X \overset{H_0}{\sim} X_0 \overset{D}{=} B\left(n,\frac{t_1}{t_1+t_0}\right)$$
$$X \overset{H_a}{\sim} X_a \overset{D}{=} B\left(n,\frac{(\lambda+\delta \lambda)t_1}{(\lambda+\delta\lambda)t_1+\lambda t_0}\right)$$
$$X \overset{H_0'}{\sim} Y_0 \overset{D}{=} l \times B\left(n,\frac{t_1}{t_1+t_0}\right)$$
$$X \overset{H_a'}{\sim} Y_a \overset{D}{=} l \times B\left(n,\frac{(\lambda+\delta \lambda)t_1}{(\lambda+\delta\lambda)t_1+\lambda t_0}\right)$$

\begin{proposition}\label{deter_cmpd_same_power}
If we apply the rate test defined in theorem \ref{rate_test} to data generated under $H_0'$ as the null and $H_a'$ as the alternate hypothesis (defined above), the false negative rate for any given false positive rate is exactly the same as when we apply the test to data generated from $H_0$ as the null and $H_a$ as the alternate.
\end{proposition}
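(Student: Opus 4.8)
The plan is to reduce the deterministically-compounded problem to the plain Poisson problem by conditioning on the total number of observed events and exploiting that every count produced by a $DP(\lambda,l)$ process is exactly $l$ times an underlying Poisson arrival count. First I would condition on the total number of events being $n$. By the first bullet of Lemma \ref{properties_det_cm_poisson} this forces $n=kl$ for some integer $k$, and $k$ is precisely the total number of underlying Poisson arrivals across the two groups. Conditional on $k$ arrivals, Lemma \ref{lem:conditional_binom} and Corollary \ref{null_rate_test} tell us the split of arrivals between the groups is binomial exactly as in the Poisson case, so the second and third bullets of Lemma \ref{properties_det_cm_poisson} give $Y_0 \overset{D}{=} l\,X_0$ and $Y_a \overset{D}{=} l\,X_a$, where $X_0 \overset{D}{=} B(k,\frac{t_1}{t_1+t_0})$ and $X_a \overset{D}{=} B(k,\frac{(\lambda+\delta\lambda)t_1}{(\lambda+\delta\lambda)t_1+\lambda t_0})$ are exactly the null and alternate statistics of the Poisson rate test applied to $k$ events.

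Next I would show the conditional trade-off curves coincide. Working conditionally on the total count, the test's assumed null is $B(n,\frac{t_1}{t_1+t_0})$ whereas the true null is $Y_0$; by Theorem \ref{y0_no_effect} (equivalently Proposition \ref{fnr_fpr}) the assumed null cancels and the conditional trade-off is $\beta(\alpha)=F_{Y_a}S_{Y_0}^{-1}(\alpha)$. I would then feed in the scaling relations: the third bullet of Lemma \ref{properties_det_cm_poisson} gives $S_{Y_0}^{-1}(\alpha)=l\,S_{X_0}^{-1}(\alpha)$, while $Y_a \overset{D}{=} l\,X_a$ gives $F_{Y_a}(l x)=F_{X_a}(x)$. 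Composing these collapses $F_{Y_a}S_{Y_0}^{-1}(\alpha)$ to $F_{X_a}S_{X_0}^{-1}(\alpha)$, which is exactly the conditional Poisson trade-off of equation \ref{beta_definition2}. Concretely, in the master formula \ref{beta_rate_2} this is the change of variables $n=kl$, $j=li$: the summation limit $S_{Y_0}^{-1}(\alpha)=l\,S_{X_0}^{-1}(\alpha)$ together with the mass identities $P(N_0=li)=P(M_0=i)$ and $P(N_1=l(k-i))=P(M_1=k-i)$ (writing $M_0,M_1$ for the underlying arrival counts) turns the deterministic-compounding sum into the plain Poisson sum term by term.

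Finally I would marginalize over the total count. Under both $H_0$ and $H_0'$ the total arrival count is Poisson with mean $\lambda(t_0+t_1)$, and under both $H_a$ and $H_a'$ it is Poisson with mean $\lambda t_0+(\lambda+\delta\lambda)t_1$ (Lemma \ref{sum_of_poissons}); hence the mixing weights over $k$ are identical in the two scenarios, and since the per-$k$ terms already agree, summing gives $\beta_{DP}(\alpha)=\beta_{PP}(\alpha)$ for every $\alpha$. The step I expect to be the main obstacle is exactly this assembly across total-count slices. The relation $S_{Y_0}^{-1}(\alpha)=l\,S_{X_0}^{-1}(\alpha)$ is an identity about where probability mass sits (the atom of $X_0$ at $k$ is the atom of $Y_0$ at $kl$), but because the statistic lives only on multiples of $l$ while the test's assumed null $B(kl,\frac{t_1}{t_1+t_0})$ is spread over all integers, the inverse survival functions must be handled with the same continuity and invertibility convention used in Theorem \ref{y0_no_effect}, and one must check that the false-positive-rate bookkeeping is consistent from slice to slice so that summing the conditional contributions at a common $\alpha$ is legitimate. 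Phrasing the whole argument directly through the $\alpha$-parametrized master formula \ref{beta_rate_2}, rather than through the $\hat{\alpha}$-parametrized one, is what lets this bookkeeping go through cleanly.
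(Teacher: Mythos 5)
Your proposal is correct and follows essentially the same route as the paper's proof: the core step in both is the change of variables $n=kl$, $j=li$ in the double sum of equation \ref{beta_rate_2}, using $S_{Y_0}^{-1}(\alpha)=l\,S_{X_0}^{-1}(\alpha)$ from Lemma \ref{properties_det_cm_poisson} and the mass identity \ref{d_pois_pmf} to match the deterministic-compounding sum term by term with the plain Poisson sum. The extra scaffolding you add (conditioning on the total count, invoking Theorem \ref{y0_no_effect}, and checking the mixing weights over $k$ agree) is sound but not needed once you work directly with the $\alpha$-parametrized unconditional sum, which is exactly what the paper does.
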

\begin{proof}
From equation \ref{beta_rate_2}, we get under $H_0'$ and $H_a'$ the false negative rate as a function of the false positive rate:

$$\beta_M(\alpha) = \sum\limits_{n=0}^{\infty}\sum\limits_{j=0}^{S_{Y_0}^{-1}(\alpha)}
P(M_1(t)=j)P(M_0(t)=n-j)$$

From the third result of lemma \ref{properties_det_cm_poisson}:

$$= \sum\limits_{n=0}^{\infty}\sum\limits_{j=0}^{l S_{X_0}^{-1}(\alpha)}
P(M_1(t)=j)P(M_0(t)=n-j)$$

Setting $n=ml$, noting that both the probabilities in the summation will be $0$ when $n$ is not a multiple of $l$ (per equation \ref{d_pois_pmf}):

$$= \sum\limits_{m=0}^{\infty}\sum\limits_{j=0}^{l S_{X_0}^{-1}(\alpha)}
P(M_1(t)=j)P(M_0(t)=ml-j)$$

Similarly setting $j=kl$ and applying equation \ref{d_pois_pmf} (assuming $N_0$ and $N_1$ follow $PP(\lambda)$ and $PP(\lambda+\delta \lambda)$ respectively):

$$\beta_M(\alpha)= \sum\limits_{m=0}^{\infty}\sum\limits_{k=0}^{S_{X_0}^{-1}(\alpha)}
P(N_1(t)=k)P(N_0(t)=m-k)$$

Also from equation \ref{beta_rate_2} , the same rate under $H_0$ and $H_a$:

$$\beta_N(\alpha) = \sum\limits_{n=0}^{\infty}\sum\limits_{j=0}^{S_{X_0}^{-1}(\alpha)}
P(N_1(t)=j)P(N_0(t)=n-j)$$

It is easy to see that $\beta_M(\alpha)=\beta_N(\alpha)$. Basically, in the summation for $\beta_M$, we simply shifted all the terms from the $\beta_N$ summation to multiples of $l$, but still ended up summing all the same terms. That is why the false negative rate remained exactly the same.
\end{proof}

Does getting the same false negative rate mean somehow that the deterministic-ally compounded Poisson process is similar to the regular Poisson process? One way to visualize how different two distributions are is with a quantile-quantile plot. A quantile of a distribution, $X$ is defined as its inverse CDF, $F_X^{-1}(q)$. If we take another distribution, $Y$ and plot $F_{Y}^{-1}(q)$ with $F_X^{-1}(q)$ for various values of $q$, we get the Q-Q plot which should be a straight line if the two distributions are the same. The idea is that we're plotting:

$$q = F_X(x) = F_Y(y)$$

or in other words,

$$y = F_Y^{-1}F_X(x)$$

If $Y \overset{D}{=}X$, the QQ plot will become the line: $y=x$. So, the closer the distributions are, the closer the plot is to the line $y=x$. Compare this to the relationship between false positive rate and type-1 error from equation \ref{alpha_general}: $\alpha(\hat{\alpha}) = S_{X_0}S_{Y_0}^{-1}(\hat{\alpha})$. Again, we expect that if $X_0 \overset{D}{=} Y_0$, we will get $\alpha=\hat{\alpha}$, a straight line. For the deterministically compounded Poisson process, we can get $\alpha(\hat{\alpha})$ in closed form.

\begin{proposition}
When the rate test from theorem \ref{rate_test} is applied to the deterministically compounded Poisson process, we get a false positive rate ($X_0\overset{D}{=}B(ml,\frac{t_1}{t_1+t_0})$):

\begin{multline*}\alpha(\hat{\alpha}) = 1-\\ \sum\limits_{m=0}^{\infty} \frac{e^{-\lambda(t_1+t_2)}(\lambda(t_1+t_2))^m}{m!} \sum\limits_{k=0}^{\left[ \frac{S_{X_0}^{-1}(\hat{\alpha})}{l} \right]} {m \choose k}p^k(1-p)^{m-k}\end{multline*}

Where $p=\frac{t_1}{t_1+t_0}$.
\end{proposition}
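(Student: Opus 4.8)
The plan is to compute the false positive rate directly as the marginal probability of rejecting the null under $H_0'$ (both groups following $DP(\lambda,l)$ with the common rate $\lambda$), averaging over the random total number of Poisson arrivals exactly as in the derivation leading to equation \ref{beta_rate_2}. Writing the treatment statistic under $H_0'$ as $M_1(t) = l\,K$, where $K$ is the number of \emph{Poisson} arrivals in the treatment group, and using complementation ($\alpha = 1 - P(\text{fail to reject})$), I expect to land directly on the stated ``$1-\dots$'' form.

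First I would condition on the total number of Poisson arrivals $m = N_0 + N_1$. By Lemma \ref{sum_of_poissons}, this total is Poisson distributed with mean $\lambda(t_0+t_1)$, which supplies the outer weight $\frac{e^{-\lambda(t_0+t_1)}(\lambda(t_0+t_1))^m}{m!}$. Next, by Corollary \ref{null_rate_test}, conditional on $m$ arrivals the treatment Poisson count $K$ is $B\!\left(m,p\right)$ with $p=\frac{t_1}{t_1+t_0}$, the compounded treatment statistic is $M_1(t)=l\,K$ (supported on multiples of $l$), and the corresponding total compounded events are $n = lm$. The test, being blind to the compounding, computes its rejection threshold as $x_t = S_{X_0}^{-1}(\hat{\alpha})$ from $X_0 \overset{D}{=} B(lm,p)$, and rejects when $M_1(t) > x_t$, i.e. $l\,K > x_t$. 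I would then translate this into the integer event defining the complementary (fail-to-reject) region, sum $\binom{m}{k}p^k(1-p)^{m-k}$ over $k$ from $0$ to $\big[S_{X_0}^{-1}(\hat{\alpha})/l\big]$, weight by the Poisson factor, and subtract from $1$ (using $\sum_m P(m)=1$) to obtain the claimed expression.

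The main obstacle is the lattice bookkeeping in the threshold step: $x_t = S_{X_0}^{-1}(\hat{\alpha})$ is defined through the Poisson-assumption Binomial $B(lm,p)$, whereas the true statistic $l\,K$ lives on the coarser grid $l\mathbb{Z}$ driven by $K\sim B(m,p)$. I must justify carefully that the cut $l\,K > x_t$ collapses to the clean integer cut $K > \big[x_t/l\big]$ (equivalently, that the fail-to-reject set is exactly $\{K \le [S_{X_0}^{-1}(\hat{\alpha})/l]\}$), which is the analogue on the $B(lm,p)$ scale of the rescaling identity $S_{Y_0}^{-1}(\alpha)=l\,S_{X_0}^{-1}(\alpha)$ established in Lemma \ref{properties_det_cm_poisson}. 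Everything else is the same marginalize-then-complement manipulation already used for $\beta$, so once the floor is pinned down the result falls out. (I would also flag that the $t_1+t_2$ appearing in the statement should read $t_1+t_0$.)
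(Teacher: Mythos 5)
Your proposal is correct and matches the paper's own proof in essence: both complement the fail-to-reject probability, marginalize over the total count, and exploit the fact that under $H_0'$ the statistic lives on the lattice $l\mathbb{Z}$ so that the inner sum truncates at $\bigl[S_{X_0}^{-1}(\hat{\alpha})/l\bigr]$. The only cosmetic difference is that you condition directly on the number of Poisson arrivals $m$ whereas the paper conditions on the number of compounded events $n$ and then reindexes $n=ml$, $j=kl$; your care over the floor-function step ($lK\le x_t\iff K\le\bigl[x_t/l\bigr]$ for integer $K$, i.e.\ the rescaling identity of Lemma \ref{properties_det_cm_poisson}) and your flag that $t_1+t_2$ should read $t_1+t_0$ are both well taken.
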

\begin{proof}
Per equation \ref{alpha_general}, we have $\alpha(\hat{\alpha}) = S_{Y_0}S_{X_0}^{-1}(\hat{\alpha})$. So conditional on a total of $n$ events from both groups, the false positive rate becomes:

\begin{align}
\alpha^{(n)}(\hat{\alpha}) = \sum\limits_{j=S_{X_0}^{-1}(\hat{\alpha})+1}^{n}P(N_0=j|N_0+N_1=n)\nonumber\\
= 1-\sum\limits_{j=0}^{S_{X_0}^{-1}(\hat{\alpha})}P(N_0=j|N_0+N_1=n)\nonumber
\end{align}

Where $N_0$ and $N_1$ follow the distributional assumptions of $H_0'$, which is deterministically compounded Poisson. Marginalizing over all $n$, we get:

\begin{equation*}
\alpha(\hat{\alpha}) = 1-\sum\limits_{n=0}^{\infty}\sum\limits_{j=0}^{S_{X_0}^{-1}(\hat{\alpha})}P(N_0=j|N_0+N_1=n)P(N_0+N_1=n)
\end{equation*}

Since $N_0$ and $N_1$ follow the deterministically compounded Poisson process and $Y_0 \overset{D}{=} l \times B\left(n,\frac{t_1}{t_1+t_0}\right)$, we will only get non-zero terms in this summation when $n=ml$ and $j=kl$. Also, although $j$ will go up to $S_{X_0}^{-1}(\hat{\alpha})$, it'll only hit non-zero terms until $j=\left[\frac{S_{X_0}^{-1}(\hat{\alpha})}{l}\right]$

This gives us the result:

\begin{multline}\label{alpha_det_comp_pois}
\alpha(\hat{\alpha}) = 1-\sum\limits_{m=0}^{\infty} \frac{e^{-\lambda(t_1+t_2)}(\lambda(t_1+t_2))^m}{m!} \\ \sum\limits_{k=0}^{\left[ \frac{S_{X_0}^{-1}(\hat{\alpha})}{l} \right]} {m \choose k}p^k(1-p)^{m-k}
\end{multline}
Where,
$$p=\frac{t_1}{t_1+t_0}$$
\end{proof}
The reason we don't get the same terms unlike with $\beta(\alpha)$ is that the inner summation is now up to $S_{X_0}^{-1}(\hat{\alpha})$ instead of $S_{Y_0}^{-1}(\hat{\alpha})$. We can now use equation \ref{alpha_det_comp_pois} or simulation to obtain plots for $\alpha$ with $\hat{\alpha}$ (shown below). We see that as $l$ increases, the deterministically compounded distribution starts moving away from the Poisson process as expected. However, this deviation has no effect at all on the false negative to false positive trade-off as demonstrated earlier, which stays the same regardless of $l$.  We do note that our false positive rate starts diverging from the type-1 error rate, but this can easily be corrected for, especially since we know the closed form. This should drive home the point that violating the distributional assumptions of the null hypothesis of a hypothesis test isn't a good argument for not applying it or necessarily a sign of weakness for it.

In the next distributional violation we cover, we'll take this a step further and see that violating the assumptions of the test to a larger extent actually causes the false negative to false positive trade off to become better!

\begin{figure}
  \includegraphics[width=0.8\linewidth]{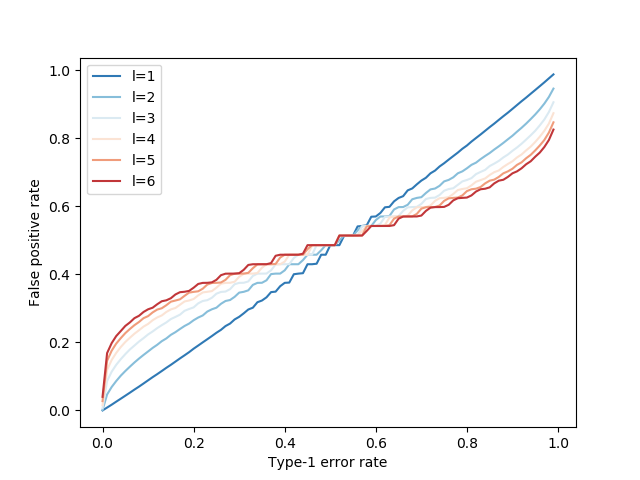}
  \caption{Relationship between type-1 error rate and false positive rates for the deterministically compounded Poisson process with various values of compounding.}
  \label{alpha_alpha_hat_deter_cmpd_poisson}
\end{figure}

\paragraph{Binomially compounded Poisson process}
Next, we explore the Binomially compounded Poisson process where the compounding factor, $C$ follows a Binomial distribution with parameters $l$ and $p$. Meaning with each arrival, instead of there being just one event, we generate a binomial random variable (parameters $l$ and $p$) and that dictates the number of events. Let $M(t)$ be this compound point process and $N(t)$ the underlying Poisson process. We then get:

$$P(M(t) = j) = \sum\limits_{m=0}^\infty P(M(t)=j|N(t)=m)P(N(t)=m)$$

Conditional on $N(t)=m$, we basically end up summing $m$ binomial random variables with the same $p$ parameter. This becomes another binomial with parameters $lm$ and $p$. So we get:

$$P(M(t)=j)=\sum\limits_{m=0}^\infty {lm \choose j}p^j(1-p)^{lm-j} \left(\frac{e^{-\lambda t} (\lambda t)^m}{m!}\right)$$

Unfortunately, there doesn't seem to be a closed form for this summation, meaning we can't use equation \ref{beta_rate_2} to get a closed form for the $\alpha$-$\beta$ trade off either. We can however get it from simulation.

When we plot the $\beta$-$\alpha$ trade off for $p=0.7$ and various values of $l$, we see that applying the rate test to the Binomially compounded Poisson process does give us a worse $\alpha$-$\beta$ trade off curve than when it is applied to the Poisson process as expected. However, as we increase $l$, the trade off curve actually becomes better and starts approaching the one for the Poisson process.

\begin{figure}
  \includegraphics[width=0.8\linewidth]{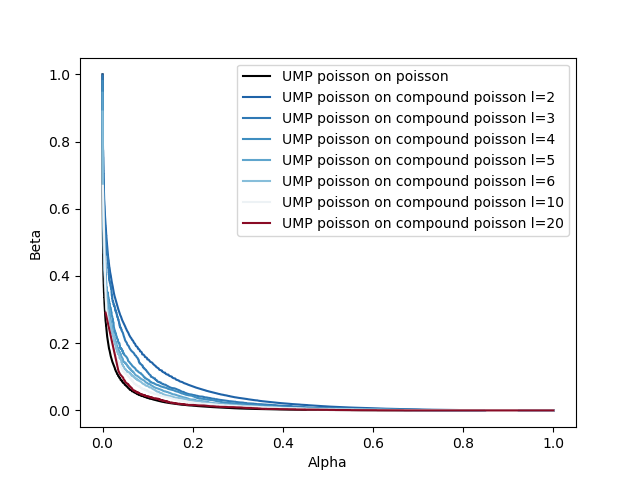}
  \caption{$\alpha$-$\beta$ trade off for Binomially compounded Poisson for $p=0.7$ and various values of $l$.}
  \label{alpha-beta_cmpd_binom}
\end{figure}

On the other hand, when we plot the false positive rate with type-1 error rate (which as mentioned earlier, tells us how close the distributions are to each other like a QQ plot), we see that higher values of $l$ diverge more from the $\alpha=\hat{\alpha}$ line, meaning they get more ``different'' as distributions. See figure \ref{alpha_alpha_hat_deter_cmpd_poisson}.

\begin{figure}
  \includegraphics[width=0.8\linewidth]{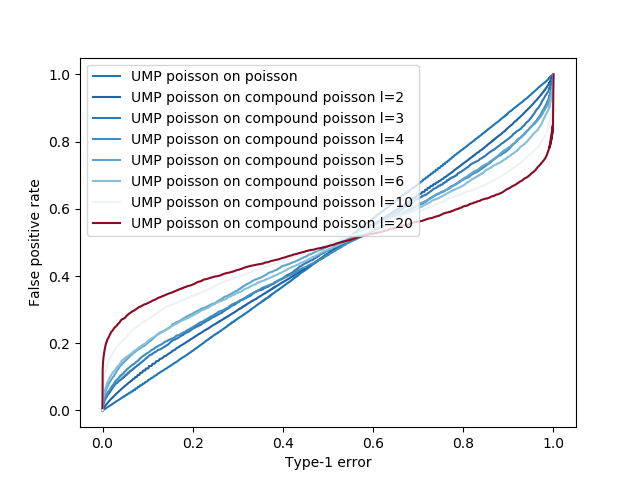}
  \caption{$\alpha$ vs $\hat{\alpha}$ for the Binomially compounded Poisson process for various values of $l$.}
  \label{alpha_alpha_hat_deter_cmpd_poisson}
\end{figure}

So, we see from figures \ref{alpha-beta_cmpd_binom} and \ref{alpha_alpha_hat_deter_cmpd_poisson} that as we increase the $l$ parameter, the distribution diverges more from Poisson while conversely, the false negative rate becomes better and approaches that of the Poisson distribution. What could be the reason for this behavior? We know that the mean of a $B(l,p)$ distribution is $lp$ while its variance is $lp-lp^2$. So as we increase $l$, the difference between the mean and variance increases, making the binomially compounded Poisson closer and closer to the deterministically compounded Poisson with compounding factor, $[lp]$. And we showed in proposition \ref{deter_cmpd_same_power} that the deterministically compounded Poisson process gets the same $\alpha$-$\beta$ trade off as a regular Poisson process.

To validate our assumption that the binomially compounded Poisson approaches the deterministically compounded Poisson as the $l$ parameter of the $B(l,p)$ distribution increases, we plot a QQ plot between them in figure \ref{qq_deter_poisson_binom_poisson}. And we see indeed that the plot becomes closer and closer to a straight line as $l$ increases.

\begin{figure}
  \includegraphics[width=0.8\linewidth]{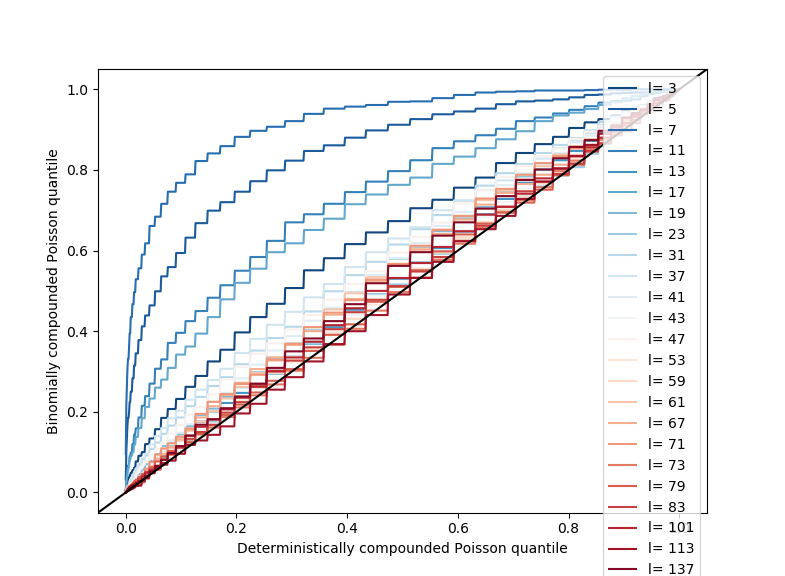}
  \caption{QQ plot between deterministically and binomially compounded Poisson. The $l$ is number of binomial tosses and $p=0.7$}
  \label{qq_deter_poisson_binom_poisson}
\end{figure}

In conclusion, we saw for the Binomially compounded Poisson process that as it diverges more from the distributional assumptions of the rate test, the false negative rate corresponding to any false positive rate actually improves. Also note that since the Binomial distribution is a very reasonable assumption for compounding processes that involve a random numbers generated over integers $\leq n$, this is not a pathological edge case.

Hopefully these examples with different Compound Poisson processes have demonstrated that violating the assumptions of the null hypothesis on top of which the test is built isn't necessarily detrimental to its effectiveness. The next generalization of the Poisson process we'll cover is the mixed or conditional Poisson distribution, but that deserves a section of its own.

\section{Mixed Poisson process: starring the Negative Binomial}
As we know, the Poisson process has only one parameter, the rate $\lambda$. In generalizing the Poisson process then, we could condition on this parameter, imagining that it itself is picked from some distribution (say $L$). So conditional on $L=\lambda$, we get a Poisson process with rate $\lambda$. This is called the ``mixed Poisson process'' or ``conditional Poisson process'' and is covered in section 5.4.3 of \cite{ross} as well as \cite{mxd_poisson_paper}. 

Although any distribution can be chosen for the rate $L$, a natural choice is the Gamma distribution since it is a \href{https://en.wikipedia.org/wiki/Conjugate_prior}{conjugate prior}, meaning that if we choose to update $L$ once we observe some data, it will also be a (probably different) Gamma distribution. This also means that we have closed form expressions for the distributions of the number of events in interval $t$ ($N(t)$) and inter-arrival time between events ($T$) associated with the point process. Example 5.29 of \cite{ross} shows that if $L$ is Gamma distributed with parameters $\theta$ and $m$, implying a density function:

\begin{equation}\label{gamma_definition}g_L(\lambda) = \theta e^{-\theta \lambda}\frac{(\theta \lambda)^{m-1}}{(m-1)!}\end{equation}
then the distribution of $N(t)$ becomes negative binomial with probability mass function:

\begin{equation}\label{eqn:neg_binom}
P(N(t)=n)={n+m-1 \choose n} \left(\frac{\theta}{\theta+t}\right)^m \left(\frac{t}{\theta+t}\right)^n 
\end{equation}

This is the number of tails required before we observe a total of $m$ heads when the probability of heads is $p=\frac{\theta}{t+\theta}$.

Henceforth, we will denote this particular mixed Poisson process by $NBP(m,\theta)$ and the distribution by $NB(m,p)$. Since the negative binomial distribution is the most common substitute to address the property of the Poisson distribution wherein its mean and variance are equal, the negative binomial process likewise addresses this and allows the variance in $N(t)$ to be greater than the mean (which is a feature we see in most real world point processes like Azure reboots, natural disasters, machine and human organ failures, etc.). 

We can also obtain a closed form expression for the probability density function of the inter-arrival times ($T$):

\begin{proposition}
For a mixed Poisson process with the rate being drawn from $L$, a Gamma distribution; the distribution of the inter-arrival times, $T$ becomes Lomax (Pareto type-2).
\end{proposition}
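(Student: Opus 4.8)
The plan is to exploit the defining structure of the mixed Poisson process: conditional on the sampled rate $L=\lambda$, the process is an ordinary $PP(\lambda)$, whose inter-arrival times are exponential with rate $\lambda$. Hence the conditional survival function is $P(T>t \mid L=\lambda)=e^{-\lambda t}$, and I recover the unconditional law by marginalizing over $L$ against the Gamma density $g_L$ from equation \ref{gamma_definition}:
\begin{equation*}
S_T(t)=P(T>t)=\int_0^\infty e^{-\lambda t}\,g_L(\lambda)\,d\lambda .
\end{equation*}
This integral is simply the Laplace transform of the Gamma density evaluated at $t$, so the whole problem reduces to one standard computation.

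First I would substitute $g_L$ and pull out the constants, reducing the task to
\begin{equation*}
S_T(t)=\frac{\theta^m}{(m-1)!}\int_0^\infty \lambda^{m-1}e^{-(\theta+t)\lambda}\,d\lambda .
\end{equation*}
The inner integral is the Gamma integral $(m-1)!/(\theta+t)^m$, so the factorials cancel and I obtain $S_T(t)=\bigl(\theta/(\theta+t)\bigr)^m=(1+t/\theta)^{-m}$, which is manifestly the survival function of a Lomax (Pareto type II) distribution with shape $m$ and scale $\theta$. Differentiating $-S_T$ then yields the density $f_T(t)=m\theta^m/(\theta+t)^{m+1}=\tfrac{m}{\theta}(1+t/\theta)^{-(m+1)}$, the standard Lomax density, completing the identification. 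As an independent cross-check that reuses an earlier result, note that $P(T>t)=P(N(t)=0)$; setting $n=0$ in the negative binomial mass function of equation \ref{eqn:neg_binom} gives $\bigl(\theta/(\theta+t)\bigr)^m$ directly, matching $S_T(t)$.

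There is no serious algebraic obstacle — the derivation is a single Gamma integral — so the step that most deserves care is conceptual rather than computational. In a mixed Poisson process the successive inter-arrival times are \emph{not} independent, since each observed gap updates the posterior for $L$, and the process is therefore not a renewal process in the ordinary memoryless sense. The statement must accordingly be read as a claim about the marginal law of a single inter-arrival (equivalently the waiting time to the first event), for which the condition-then-mix argument above is exactly valid. I would flag this explicitly so the reader does not incorrectly infer that the $T$'s are i.i.d. across arrivals.
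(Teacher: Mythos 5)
Your proof is correct and follows the same core strategy as the paper: condition on $L=\lambda$ so that the inter-arrival time is exponential, then marginalize against the Gamma density. The one substantive difference is that you integrate the conditional \emph{survival} function $e^{-\lambda t}$ (i.e.\ compute the Laplace transform of the Gamma density), whereas the paper integrates the conditional density $\lambda e^{-\lambda t}$ directly. Your route is slightly cleaner and, notably, lands on the correct answer where the paper's final line does not: the paper's equation \ref{lomax_def} asserts $f_T(t)=\left(\frac{\theta}{t+\theta}\right)^m$, but carrying its own integral $\frac{\theta^m}{(m-1)!}\cdot\frac{m!}{(t+\theta)^{m+1}}$ through gives $f_T(t)=\frac{m\,\theta^m}{(t+\theta)^{m+1}}$; the expression the paper labels as the PDF is in fact the survival function $S_T(t)$, which is exactly the quantity you compute. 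Your cross-check via $P(T>t)=P(N(t)=0)$ using equation \ref{eqn:neg_binom}, and your caveat that the result concerns the marginal law of a single inter-arrival (the $T$'s are not i.i.d.\ for a mixed Poisson process), are both absent from the paper and are worthwhile additions.
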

\begin{proof}
Since conditional on $L=\lambda$ we get a regular Poisson point process with inter-arrival times being exponentially distributed with rate $\lambda$, we marginalize out $\lambda$ to get the new inter-arrival distribution for the mixed Poisson process:

$$f_T(t) = \int\limits_0^\infty (\lambda e^{-\lambda t}) g_L(\lambda) d\lambda$$

Substituting the Gamma density from equation \ref{gamma_definition}:
$$f_T(t) = \int\limits_0^\infty (\lambda e^{-\lambda t}) \theta e^{-\theta \lambda} \frac{(\theta \lambda)^{m-1}}{(m-1)!} d\lambda$$
$$=\frac{\theta^m}{(m-1)! } \int\limits_0^\infty e^{-(t+\theta)\lambda} \lambda^m d \lambda$$
Now,

$$\int\limits_0^\infty e^{-(t+\theta)\lambda }\lambda^m d \lambda = \frac{m!}{(t+\theta)^{m+1}}$$

Which gives us:

\begin{equation}\label{lomax_def}f_T(t) = \left(\frac{\theta}{t+\theta}\right)^m\end{equation}

And this is the PDF of the Lomax distribution.
\end{proof}

The Lomax distribution is polynomially instead of exponentially decaying, meaning it has much heavier tails than the exponential distribution. It is easy to see from equation \cite{haz_rate_def} that the Lomax distribution has a decreasing hazard rate and this is consistent with an over-dispersed point process (variance higher than mean). To see this, consider dividing the time over which we observe the process into tiny slices. If each interval is small enough, we'll either see zero or one events in them, making them Bernoulli random variables. For a Poisson process, these Bernoulli random variables were independent since the event rate at one of these intervals stays the same regardless of what happens around it. For the Lomax distribution with its decreasing hazard rate on the other hand, if a long time has passed without an event occurring, the decreasing event rate makes more events unlikely as well. This makes the Bernoulli variables positively correlated, making the variance in their sum (the total number of events in our observation interval) greater than for the Poisson, leading to over-dispersion. 

In fact, it can be shown that Poisson mixture models are only capable of modeling decreasing hazard rate inter-arrivals and hence over-dispersed point processes (the many examples covered in \cite{mxd_poisson_paper} for example all have this property). If we want to model an under-dispersed point process (variance lower than mean), we can choose an inter-arrival distribution that is capable of modeling increasing hazard rates. The Weibull distribution is one such candidate, capable of modeling both monotonically increasing and decreasing hazard rates and is covered in \cite{weibull_book}.

The Poisson process had the property of independent increments, meaning we can start an instance of the process and observe it for a long period of time or start many independent instances and observe them all for shorter periods. As long as the total time of observation is the same, our estimator for the failure rate has the exact same properties. Since for a mixed Poisson process, observing one interval of time gives us information about the mixing distribution and hence informs what will happen in proceeding intervals, we no longer have this independent increments property. 

While equation \ref{rate_def} is still an unbiased estimator for the average failure rate ($E(L)$), it turns out that it's better to collect many small intervals than one big one. When we observe the process for one long contiguous interval, we get an estimator that isn't asymptotically consistent (meaning the variance in the estimator doesn't converge to $0$ even as the observation period becomes arbitrarily large).

\begin{proposition}\label{prop:no_consistent}
If we hzve a single observation window for the mixed Poisson process mixed with distribution $L$, the estimator of the rate will become $E(L)$ and it's variance will be bounded below by $V(L)$, meaning it won't be asymptotically consistent.
\end{proposition}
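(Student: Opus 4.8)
The plan is to treat the single-window estimator exactly as in the pure Poisson case and then condition on the realized rate, applying the tower property for the mean and the law of total variance for the variance. First I would write the estimator as $\hat{\lambda} = N(t)/t$, where $N(t)$ is the event count in the one observation window of length $t$. The crucial structural fact about a mixed Poisson process is that a single value of the rate is drawn from $L$ once and held fixed for the whole window, so conditional on $L=\lambda$ the count behaves like an ordinary Poisson process. This gives the two conditional moments I will lean on throughout, namely $E(N(t)\mid L) = L\,t$ and $V(N(t)\mid L) = L\,t$.

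For the mean, the law of total expectation yields
\begin{equation*}
E(\hat{\lambda}) = E\bigl(E(N(t)/t \mid L)\bigr) = E(L),
\end{equation*}
so the estimator stays unbiased for the average rate $E(L)$, consistent with the earlier unbiasedness result specialized to a fixed rate. For the variance I would invoke the law of total variance, $V(\hat{\lambda}) = E\bigl(V(\hat{\lambda}\mid L)\bigr) + V\bigl(E(\hat{\lambda}\mid L)\bigr)$. The first term is $E(L\,t / t^{2}) = E(L)/t$ and the second is $V(L)$, giving the closed form
\begin{equation*}
V(\hat{\lambda}) = \frac{E(L)}{t} + V(L).
\end{equation*}
Since the first term is non-negative, $V(\hat{\lambda}) \geq V(L)$ for every $t$, and letting $t \to \infty$ sends the first term to zero so that $V(\hat{\lambda}) \to V(L) > 0$ whenever $L$ is non-degenerate. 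Hence the estimator is not asymptotically consistent, which is exactly the claim.

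The computation is short; the part worth stating carefully is the interpretation of why the $V(L)$ floor is irreducible. A single contiguous window supplies only one draw of $L$, so although the conditional Poisson noise $E(L)/t$ averages away as the window lengthens, the uncertainty about which rate was drawn never does. I expect this to be the only real obstacle, not in the sense of analytic difficulty but in making precise the contrast with collecting many short independent windows, where one would average over many independent realizations of $L$ and thereby recover consistency. I would flag that distinction as the conceptual crux and present the law-of-total-variance decomposition as the device that isolates the non-vanishing $V(L)$ term.
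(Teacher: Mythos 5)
Your proposal matches the paper's own proof essentially step for step: both condition on $L$, use the tower property to get $E(\hat{\lambda})=E(L)$, and apply the law of total variance to obtain $V(\hat{\lambda})=\frac{E(L)}{t}+V(L)$, whose limit as $t\to\infty$ is $V(L)$. The argument is correct and your added remark about the single draw of $L$ being the irreducible source of variance is a fair articulation of what the paper itself notes informally right after the proof.
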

\begin{proof}
It is a regular Poisson process when conditioned on some distribution $L(\lambda)$ of the rate, $\lambda$. Now, we're still interested in calculating the average hazard rate of this process. It's clear by definition (conditional on $L$, we get the regular Poisson process):

$$E(N(t)|L)=Lt$$
$$V(N(t)|L)=Lt$$

Using the law of total expectation:

$$E(N(t)) = t E(L)$$
Using the law of total variance:

$$V(N(t))=E(V(N(t)|L))+V(E(N(t)|L))$$
$$=E(Lt)+V(Lt)$$
$$=tE(L)+t^2V(L)$$

This means that if we observe this process for a large period of time, $t$, we can estimate the average hazard rate:

$$\hat{\lambda} = \frac{N(t)}{t} $$
$$=>E(\hat{\lambda})= E(L)$$

And the variance of this estimator becomes:

$$V(\hat{\lambda}) = \frac{V(N(t))}{t^2} = \frac{E(L)}{t}+V(L)$$
And so we have
$$\lim_{t \to \infty} V(\hat{\lambda}) \to V(L)$$.
\end{proof}

The result above makes sense since the rate, $\lambda$ itself is drawn from a distribution. If we restrict ourselves to one observation window, we only sample one $\lambda$ from this distribution. So, it makes sense that no matter how long we observe this instance for, the variance in the sampling of $\lambda$ itself is always there. This seems to suggest the following proposition: 

\begin{proposition}\label{prop:nbd_consistent}
If we sample $n$ time intervals and count the number of failures across all of them, the estimator from equation \ref{rate_def} is consistent as $n$ increases.
\end{proposition}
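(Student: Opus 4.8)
The plan is to read ``sampling $n$ time intervals'' in the natural way suggested by the preceding discussion, where the text notes it is better to collect many small intervals than one big one: each interval is a fresh, independent realization of the mixed Poisson process, so for interval $i$ we draw a rate $\lambda_i \sim L$ independently across $i$ and then run a Poisson process of that rate for a fixed observation length $\tau$. Writing $N_i$ for the count in interval $i$, the estimator from equation \ref{rate_def} applied to the pooled data is
$$\hat{\lambda} = \frac{\sum_{i=1}^n N_i}{n\tau}.$$
The crucial structural fact is that because each interval carries its own independent draw of $\lambda_i$, the counts $N_1,\dots,N_n$ are independent and identically distributed marginally, which is precisely what was missing in the single-window setting of Proposition \ref{prop:no_consistent}.

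First I would compute the marginal moments of a single $N_i$ exactly as in the proof of Proposition \ref{prop:no_consistent}, via the laws of total expectation and total variance: conditioning on $\lambda_i$ gives $E(N_i \mid \lambda_i) = \lambda_i \tau$ and $V(N_i \mid \lambda_i) = \lambda_i \tau$, so that $E(N_i) = \tau E(L)$ and $V(N_i) = \tau E(L) + \tau^2 V(L)$. Next I would assemble the moments of the estimator. Unbiasedness for the average rate is immediate: $E(\hat{\lambda}) = \frac{n \tau E(L)}{n\tau} = E(L)$. For the variance, independence lets the variance of the sum split into a sum of variances, giving
$$V(\hat{\lambda}) = \frac{n\,(\tau E(L) + \tau^2 V(L))}{(n\tau)^2} = \frac{E(L)}{n\tau} + \frac{V(L)}{n}.$$
Both terms vanish as $n \to \infty$, so $V(\hat{\lambda}) \to 0$ while the mean stays fixed at $E(L)$; consistency then follows by Chebyshev's inequality, since convergence in mean square implies convergence in probability.

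The main obstacle here is conceptual rather than computational: the entire content of the proposition lives in justifying the independence across intervals. Once one accepts that resampling $\lambda_i$ per interval makes the $N_i$ independent, the persistent $V(L)$ term from Proposition \ref{prop:no_consistent} now gets averaged down by a factor of $1/n$, and this is exactly the mechanism that restores consistency. I would therefore emphasize the contrast with the single-window case, where the one shared $\lambda$ left an irreducible $V(L)$ in the limit, as the real point of the argument, with the moment algebra being routine. If one wants to allow unequal interval lengths $t_i$, the same computation yields $V(\hat{\lambda}) = \frac{E(L)}{\sum t_i} + \frac{V(L)\sum t_i^2}{(\sum t_i)^2}$, which still tends to zero provided the lengths are bounded and $\sum t_i \to \infty$, so the equal-length assumption is a convenience rather than an essential restriction.
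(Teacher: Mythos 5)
Your proposal is correct and follows essentially the same route as the paper's proof: compute the mean and variance of the pooled estimator via the laws of total expectation and total variance, use independence across the intervals to split the variance of the sum, and observe that the resulting expression vanishes as $n \to \infty$. If anything you are slightly more careful than the paper, which drops the $E(L)t_i$ contribution to $V(N(t_i))$ and asserts without qualification that $V(L)\sum t_i^2/\left(\sum t_i\right)^2 \to 0$, whereas you retain the full variance and state a condition (bounded interval lengths with $\sum t_i \to \infty$) under which that limit actually holds.
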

\begin{proof}
Suppose we sample $n$ intervals, $t_1, t_2, \dots t_n$ from a mixed Poisson process and observe $N(t_i)$ failures in the interval $t_i$. The estimator from equation \ref{rate_def} will become:

$$\hat{\lambda} = \frac{\sum\limits_{i=1}^n N(t_i)}{\sum\limits_{i=1}^n t_i}$$

Now, from the properties of the mixed Poisson process we have:

$$E(N(t_i)|L) = L t_i$$
$$V(N(t_i)|L) = L t_i$$

So we get:

$$E(\hat{\lambda}) =  \frac{\sum\limits_{i=1}^n E(N(t_i))}{\sum\limits_{i=1}^n t_i} =  \frac{\sum\limits_{i=1}^n E(L)t_i}{\sum\limits_{i=1}^n t_i}=E(L)$$

And the variance:

\begin{align}
V(\hat{\lambda}) =  \frac{\sum\limits_{i=1}^n V(N(t_i))}{\left(\sum\limits_{i=1}^n t_i\right)^2} \nonumber\\
=  \frac{\sum\limits_{i=1}^n V(L)t_i^2}{\left(\sum\limits_{i=1}^n t_i\right)^2}\nonumber\\
=  V(L) \frac{\sum\limits_{i=1}^n t_i^2}{\left(\sum\limits_{i=1}^n t_i^2\right)+\left(\sum\limits_{i\neq j} t_i t_j\right)}\nonumber
\end{align}
It's clear that:
$$\lim_{n \to \infty} V(\hat{\lambda}) \to 0$$
\end{proof}

\subsection{Hypothesis testing for the rate of the $NBP(m,\theta)$}
Now, let's consider that the point process for the treatment and control group are $NBP(m,\theta)$ for appropriate values of the parameters. For our null hypothesis, $H_0'$, we will assume that both treatment and control groups follow the same negative binomial process, $NBP(m,\theta)$. 

For $H_a'$, we want the treatment group to have a higher average rate, $E(L)$. Since the mean of the Gamma distribution is $E(L) = \frac{m}{\theta}$, we can achieve this by either increasing $m$ or decreasing $\theta$. Since the Gamma distribution is obtained by summing $m$ exponential distributions, each with rate $\theta$, the natural choice is to reduce $\theta$. This is also the approach used in \cite{zhu}. So under $H_a'$, we will assume that the treatment group follows $NBP(m,\theta-\delta \theta)$.

From equations \ref{beta_rate_2} and \ref{eqn:neg_binom}, we get the false negative rate as a function of the chosen type-1 error rate, $\hat{\alpha}$ (where $X_0 \overset{D}{=} B\left(n,\frac{t_1}{t_1+t_0}\right)$):

$$\beta(\hat{\alpha}) = \sum\limits_{n=0}^{\infty}\sum\limits_{j=0}^{S_{X_0}^{-1}(\hat{\alpha})} P(N_0=j)P(N_1=n-j)$$

\begin{multline}\label{eqn:false_neg_nbd}=>\beta(\hat{\alpha}) =  \sum\limits_{n=0}^{\infty}\sum\limits_{j=0}^{S_{X_0}^{-1}(\hat{\alpha})} \left({m+j-1\choose j}p_1^m (1-p_1)^j \right)\\ \left({m+n-j-1\choose n-j}p_0^m (1-p_0)^{n-j} \right)\end{multline}

where,

$$p_1 = \frac{\theta-\delta \theta}{\theta-\delta \theta +t}$$
$$p_0 = \frac{\theta}{\theta +t}$$

The summation above doesn't seem to have a closed form. As a result of proposition \ref{prop:no_consistent}, we can expect that it won't converge to $0$ even as the observation period, $t \to \infty$. On the other hand, if we increase the number of observation periods indefinitely, it should converge to $0$ per proposition \ref{prop:nbd_consistent}. And we do see this through numerically calculating the summation above.

\begin{lemma}\label{sum_nb}
If $X\overset{D}{=}NB(m_1,p)$ and $Y \overset{D}{=}NB(m_2,p)$, then $X+Y$ is $NB(m_1+m_2,p)$
\end{lemma}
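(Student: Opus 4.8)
The plan is to compute the distribution of $X+Y$ directly by convolution, exactly in the spirit of Lemma~\ref{sum_of_poissons}. Writing the negative binomial mass function in the form implied by equation~\ref{eqn:neg_binom}, namely $P(X=j)=\binom{j+m_1-1}{j}p^{m_1}(1-p)^{j}$ and analogously for $Y$ with parameter $m_2$, independence of $X$ and $Y$ gives
$$P(X+Y=n)=\sum_{j=0}^{n}\binom{j+m_1-1}{j}p^{m_1}(1-p)^{j}\,\binom{n-j+m_2-1}{n-j}p^{m_2}(1-p)^{n-j}.$$
The powers of $p$ and $1-p$ combine to the same total exponent in every term of the sum, so the first step is simply to pull out the common factor $p^{m_1+m_2}(1-p)^{n}$, leaving a purely combinatorial sum of products of binomial coefficients.

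The crux is then to recognize that remaining sum as a Vandermonde-type convolution for negative binomial coefficients:
$$\sum_{j=0}^{n}\binom{j+m_1-1}{j}\binom{n-j+m_2-1}{n-j}=\binom{n+m_1+m_2-1}{n}.$$
Once this identity is established, substituting it back yields $P(X+Y=n)=\binom{n+m_1+m_2-1}{n}p^{m_1+m_2}(1-p)^{n}$, which is precisely the mass function of $NB(m_1+m_2,p)$, completing the argument.

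Hence the only real obstacle is proving that combinatorial identity, and I would do so by generating functions. Since $\binom{j+m-1}{j}$ is the coefficient of $x^{j}$ in the power series expansion of $(1-x)^{-m}$, the left-hand sum above is exactly the coefficient of $x^{n}$ in the product $(1-x)^{-m_1}(1-x)^{-m_2}=(1-x)^{-(m_1+m_2)}$, whose $x^{n}$ coefficient is $\binom{n+m_1+m_2-1}{n}$, as required. This makes the whole proof mechanical once the series identity is invoked.

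As a cleaner alternative I would note the probabilistic shortcut, which avoids the convolution entirely: by the coin-tossing interpretation given just before equation~\ref{eqn:neg_binom}, $X$ counts the tails seen before the $m_1$-th head and $Y$ the tails before a further $m_2$ heads, both with the same head-probability $p$; concatenating the two independent sequences, $X+Y$ counts the tails before the $(m_1+m_2)$-th head, which is $NB(m_1+m_2,p)$ by definition. Equivalently one can multiply the probability generating functions $\left(\tfrac{p}{1-(1-p)s}\right)^{m_1}$ and $\left(\tfrac{p}{1-(1-p)s}\right)^{m_2}$ and read off the exponent $m_1+m_2$. I would present the convolution computation as the main proof for consistency with Lemma~\ref{sum_of_poissons}, and mention the generating-function or interpretive route as confirmation.
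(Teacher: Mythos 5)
Your proof is correct, but your primary route differs from the paper's. The paper proves Lemma~\ref{sum_nb} in one sentence using exactly the probabilistic shortcut you relegate to an aside: $X$ counts tails before the $m_1$-th head, $Y$ counts tails before $m_2$ further heads of the same coin, so $X+Y$ counts tails before the $(m_1+m_2)$-th head and is $NB(m_1+m_2,p)$ by definition. Your main argument --- convolving the two mass functions, factoring out $p^{m_1+m_2}(1-p)^n$, and closing the sum with the Vandermonde-type identity $\sum_{j=0}^{n}\binom{j+m_1-1}{j}\binom{n-j+m_2-1}{n-j}=\binom{n+m_1+m_2-1}{n}$ via the expansion of $(1-x)^{-m}$ --- is a valid and complete alternative, and it has the virtue of working directly from the PMF in equation~\ref{eqn:neg_binom} and mirroring the style of Lemma~\ref{sum_of_poissons}, at the cost of requiring the generating-function identity. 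The paper's interpretive argument buys brevity and avoids any combinatorics, but it leans on the coin-tossing definition of the negative binomial rather than on the displayed formula. Since you explicitly identify the interpretive route and would present it as confirmation, you have in effect recovered the paper's proof as well; either version suffices.
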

\begin{proof}
Since $X$ is defined as the number of tails until $m_1$ heads when a coin with probability of heads, $p$ is repeatedly tossed and $Y$ represents the number of tails until $m_2$ heads, $X+Y$ will represent the number of tails until $m_1+m_2$ heads when this same coin is repeatedly tossed.
\end{proof}

\begin{conjecture}
If we collect $a$ observation periods from the control group and $b$ observation periods from the treatment group, the $\beta(\hat{\alpha}) \to 0$ as $a,b \to \infty$. If we collect only a single observation period from each group of length $t$, then, the $\beta(\hat{\alpha})$ converges asymptotically to a non-zero value as $t \to \infty$.
\end{conjecture}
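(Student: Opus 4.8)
The plan is to treat the two halves separately --- the first is a consistency statement, the second a non-degeneracy statement --- and to reduce both to the limiting behaviour of the aggregated count distributions. For the first claim I would begin by pooling: Lemma \ref{sum_nb} shows that the total over the $a$ control windows is $N_0\overset{D}{=}NB(am,p_0)$ and the total over the $b$ treatment windows is $N_1\overset{D}{=}NB(bm,p_1)$, with $p_0=\frac{\theta}{\theta+t}$ and $p_1=\frac{\theta-\delta\theta}{\theta-\delta\theta+t}$ for windows of common length $t$. Proposition \ref{prop:nbd_consistent} then forces the per-group rate estimators to concentrate, $\hat\lambda_0\to E(L_0)=\frac{m}{\theta}$ and $\hat\lambda_1\to E(L_1)=\frac{m}{\theta-\delta\theta}$, with vanishing variance as $a,b\to\infty$. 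Since $\delta\theta>0$ makes $E(L_1)>E(L_0)$, the conditional statistic $\frac{N_1}{N_0+N_1}$ concentrates at a value strictly above the null split $\frac{t_1}{t_1+t_0}$ (with $t_0=at$, $t_1=bt$), while the critical value $S_{X_0}^{-1}(\hat\alpha)$, a quantile of $B(n,\frac{t_1}{t_1+t_0})$, only fluctuates by order $\sqrt n$ around $n\frac{t_1}{t_1+t_0}$. A Chebyshev bound on the pooled negative binomials then gives $P\!\big(N_1\le S_{X_0}^{-1}(\hat\alpha)\big)\to 0$, which is exactly $\beta(\hat\alpha)\to 0$ read off the double sum \ref{beta_rate_2}.

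For the single-window claim, the central object is the limiting law of the scaled count. I would show that $\hat\lambda=\frac{N(t)}{t}\Rightarrow L$ as $t\to\infty$: conditioning on $L=\lambda$ gives a genuine Poisson count with $\frac{N(t)}{t}\to\lambda$ in probability, and de-conditioning by dominated convergence yields convergence in distribution to the mixing variable $L$ itself. Hence $\hat\lambda_0\Rightarrow L_0\sim\mathrm{Gamma}(m,\theta)$ and $\hat\lambda_1\Rightarrow L_1\sim\mathrm{Gamma}(m,\theta-\delta\theta)$, independent and non-degenerate --- the distributional shadow of Proposition \ref{prop:no_consistent}, whose variance floor $V(L)$ is precisely what stops the collapse. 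Now whenever $L_1\neq L_0$ the signal $N_1-n\frac{t_1}{t_1+t_0}$ is of order $t$, while the deviation of the threshold from $n\frac{t_1}{t_1+t_0}$ stays of order $\sqrt t$; so in the limit the test rejects exactly on $\{L_1>L_0\}$ and fails to reject on $\{L_1\le L_0\}$, regardless of the chosen $\hat\alpha$. Passing the double sum \ref{eqn:false_neg_nbd} to the limit should then give $\lim_{t\to\infty}\beta(\hat\alpha)=P(L_1\le L_0)$, which lies strictly in $(0,1)$ because two independent Gammas with overlapping support cross with positive probability even though $L_1$ is stochastically larger.

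The hard part will be the limit interchanges, not the heuristics. In both halves I replace a binomial null quantile and a sum against the true (negative-binomial, then Gamma-mixed) laws by their limits, and the delicate point is justifying that the $O(\sqrt n)$ wobble of $S_{X_0}^{-1}(\hat\alpha)$ genuinely drops out against the $O(n)$ mean separation: this needs uniform control of the binomial quantile together with a dominated-convergence or tightness argument that moves the limit inside the double sums \ref{beta_rate_2} and \ref{eqn:false_neg_nbd}. Establishing the clean value $P(L_1\le L_0)$ in the single-window case, rather than merely a positive lower bound, is where most of the technical care would be spent.
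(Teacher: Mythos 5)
The paper offers no proof of this statement: it is deliberately left as a conjecture, supported only by numerical evaluation of the double sum \ref{eqn:false_neg_nbd} (figures \ref{beta_m_nbd} and \ref{beta_w_t_nbd}) and by the heuristic that Propositions \ref{prop:no_consistent} and \ref{prop:nbd_consistent} supply about the variance of the rate estimator. Your proposal is therefore not a rederivation but a genuine upgrade, and its outline is sound. For the first half, pooling via Lemma \ref{sum_nb} and comparing $N_1$ to the binomial critical value works cleanly if you do the bookkeeping on the count scale: $N_1 - n\frac{t_1}{t_0+t_1} = \frac{aN_1-bN_0}{a+b}$ has mean of order $\frac{ab}{a+b}$ (proportional to $\delta\theta$) and standard deviation of order $\sqrt{ab/(a+b)}$, the same order as the quantile's spread, so Chebyshev gives $\beta\to 0$ whenever $\min(a,b)\to\infty$ with no balance assumption needed. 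For the second half, the key observation that $N(t)/t\to L$ conditionally on the mixture draw, hence $\hat\lambda_j\Rightarrow L_j$ with the two Gammas independent and non-degenerate, correctly turns the variance floor of Proposition \ref{prop:no_consistent} into a distributional statement; the test then asymptotically rejects exactly on $\{L_1>L_0\}$, and since $P(L_1=L_0)=0$ the boundary causes no trouble. This actually sharpens the conjecture by identifying the limit as $P(L_1\le L_0)$ with $L_0\sim\mathrm{Gamma}(m,\theta)$ and $L_1\sim\mathrm{Gamma}(m,\theta-\delta\theta)$, a concrete value that can be checked against the $\approx 34\%$ the paper reports numerically. What remains is exactly what you flag: justifying the interchange of limits in \ref{beta_rate_2} and \ref{eqn:false_neg_nbd} and the uniform control of the binomial quantile; both are routine (Slutsky plus dominated convergence) and involve no missing idea, so with that bookkeeping written out your plan would promote the conjecture to a theorem.
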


Let's consider the special case when all the observation periods, $t_i$ across both groups are of the same length, $t$. So, we get a total of $a\times t$ observation period in the control group and $b\times t$ observation period in the treatment group. 

Now, the total number of events, $N_0$ in the control group is the sum of $a$ $NB(m,\frac{\theta}{\theta+t})$ random variables. So, by lemma \ref{sum_nb}, it is $NB(am, \frac{\theta}{\theta+t})$. Similarly, the number of events $N_1$ in the treatment group is $NB(b m, \frac{\theta_1}{\theta_1+t})$ where $\theta_1=\theta-\delta \theta$. By equation \ref{beta_rate_2}, we get the false negative rate:

Increasing the time periods for both groups is consistent with decreasing the values of $p_0$ and $p_1$ in equation \ref{eqn:false_neg_nbd}, while (under the assumption of equal interval lengths), increasing the number of observation time intervals is consistent with increasing the value(s) of $m$ (for $N_0$ and $N_1$). Doing the former for a type-1 error rate of $5\%$ produces figure \ref{beta_w_t_nbd} while doing the latter produces figure \ref{beta_m_nbd}.

We see that increasing the size of a single observation interval makes $\beta$ decrease, but it asymptotically approaches a finite value (about $34\%$ in this case) while increasing the size of the number of observation intervals makes $\beta$ decrease all the way to $0$.
\begin{figure}
  \includegraphics[width=0.8\linewidth]{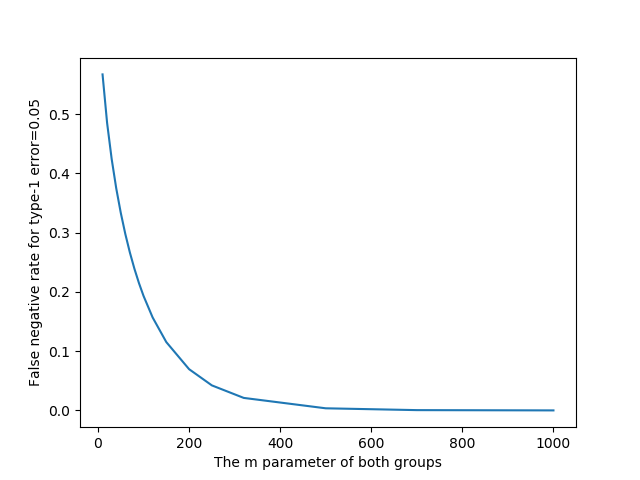}
  \caption{$\beta$ with $m$ of negative binomial}
  \label{beta_m_nbd}
\end{figure}

\begin{figure}
  \includegraphics[width=0.8\linewidth]{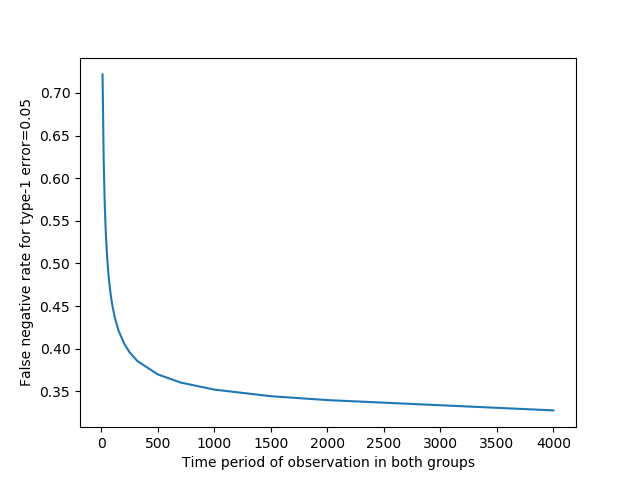}
  \caption{$\beta$ with $t$, size of observation period of both groups for negative binomial}
  \label{beta_w_t_nbd}
\end{figure}

\subsection{Comparison with the Wald test}
In \cite{zhu}, the authors consider hypothesis testing and sample size estimation for the negative binomial distribution for medical applications (making their work very relevant to this investigation). Although they focus more on estimating sample sizes and so, know some of the parameters in advance for that purpose, the sample size estimation has an implicit Wald test embedded therein which becomes an alternate approach to comparing failure rates. This subsection will be dedicated to a comparison of their test for comparing rates with the one we presented in equation \ref{rate_test}. Their formulation for the Negative Binomial point process is slightly different from but corresponds one to one with equation \ref{eqn:neg_binom}. We provide their formulation below as well as how it corresponds to ours. They define $Y_{ij}$ as the number of events during time $t_{ij}$ for subject $i$ in group $j$ ($j=0,1$, the control and treatment groups respectively). Then,

$$P(Y_{ij}=y_{ij}) = \frac{\Gamma(k^{-1}+y_{ij})}{\Gamma(k^{-1})y_{ij}!} \left(\frac{k \mu_{ij}}{1+k\mu_{ij}}\right)^{y_{ij}}  \left( \frac{1}{1+k\mu_{ij}} \right)^{\frac{1}{k}}$$

Here, $\mu_{ij}$ is the average number of events for subject $i$ in group $j$ and $\Gamma(.)$ is the Gamma function, a generalization of factorials. For any integer $m$ we have: $\Gamma(m)=(m-1)!$.

Matching this parameterization to equation \ref{eqn:neg_binom}, we also get:

$$\frac{1}{k} = m$$

$$\frac{t_{ij}}{t_{ij}+\theta} = \frac{k \mu_{ij}}{1+k\mu_{ij}}$$

Implying,

$$\frac{t_{ij}}{\theta} = k \mu_{ij}$$

So,

$$\mu_{ij} =\frac{m t_{ij}}{\theta} =E(L) t_{ij}$$

For devising the test, $\mu_{ij}$ is modeled as:

$$\log(\mu_{ij}) = \log(t_{ij})+\beta_0+\beta_1 x_{ij}$$

Here, $x_{ij}=I(j=1)$ is $1$ only if $j=1$. In other words, they assume a rate of 

\begin{equation}\label{zhu_ctrl_rate}\lambda = e^{\beta_0}\end{equation}

for the control group and 

\begin{equation}\label{zhu_trmt_rate}\lambda +\delta \lambda = e^{\beta_0+\beta_1}\end{equation} 

for the treatment (under the alternate hypothesis). 

$\beta_1$ then becomes the difference in the log-rates and the null hypothesis is predicated on it being normally distributed with mean $0$ (if there is no difference in rates, there is no difference in log-rates either).

The variance in $\beta_1$ is then estimated in equation (14) of their paper (changed some notation to avoid conflicts with this paper):

\begin{equation}\label{variance_def}V = \frac{1+\eta^2}{\mu_t (\lambda_0 + \eta \lambda_1)}+\frac{(1+\eta)k}{\eta}\end{equation}

Here, $\eta$ is the ratio of observations between treatment and control. If we use a balanced test (equal number of samples between the two groups), $\eta=1$, $\lambda_0$ and $\lambda_1$ are the rates for the two groups, $\mu_t$ is the expected number of events for each observation period and $k=\frac{1}{m}$, the inverse of the parameter of the negative binomial representing the number of heads desired. 

While $\mu_t$, $\lambda_0$ and $\lambda_1$ were easily estimated from the generated data, we weren't sure how to estimate the $k$ parameter. So, we simply plugged in the value used to generate the data in the first place. This is giving the test some information it shouldn't have (it should have access to only the data itself and nothing about the underlying process that generated said data). This is a potential flaw, but it can only help the Wald test, not hamper it.

With the variance, this hypothesis test simply becomes the standard Wald test where:

\begin{figure}
  \includegraphics[width=0.8\linewidth]{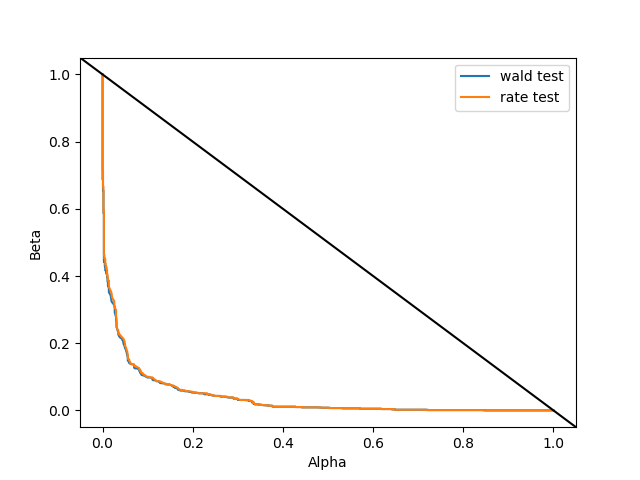}
  \caption{The $\alpha$-$\beta$ trade off for the Wald test and rate test for observation samples from the Negative Binomial distribution.}
  \label{wald_tst_comparison}
\end{figure}

\begin{itemize}
\item{Estimate the variance using equation \ref{variance_def} from the sample. Divide by sample size of control group and take square root to get standard deviation.}
\item{Find the difference in estimated log-rates from the data between treatment and control.}
\item{Find the inverse survival function of the normal distribution with mean $0$ and standard deviation calculated in step 1. This becomes the p-value of our test.}
\end{itemize}

We simulated some data from the null and alternate hypotheses as defined in table I of [3] and compared the performance of the two tests on an $\alpha$-$\beta$ curve. The result is shown in figure \ref{wald_tst_comparison}. It appears the two tests are completely on top of each other, with negligible difference in performance. We re-created this plot for various values of the free variables and the conclusion seemed to remain firmly the same. Not much to choose from between the two tests in terms of raw performance. 

This still motivates the use of the rate test over the Wald test for the following reasons:

\begin{itemize}
\item{The Wald test had to be customized for the Negative Binomial distribution (in terms of estimation of the variance) while the rate ratio test was used out of the box, ``as is'' and performed similarly. If we want to extend to some other point process apart from the negative Binomial tomorrow, we'd have some work to do for the Wald test (estimating variance) while the rate ratio test would be ready for application.}
\item{As mentioned previously, the Wald test had a bit of an unfair advantage in this experiment with regard to knowing the $k$ parameter used to generate the actual data, which probably helped its performance. The rate ratio test had no such advantage and still performed similarly.}
\item{The Wald test relies on the estimation of the variance, which blows up when we have $0$ events in one of the groups. The rate ratio test on the other hand, still produces sensible p-values.}
\item{The formula of the Wald test's p-value is more complex.}
\item{For the rate test, we can use equation \ref{fnr_fpr} to get the false negative rate in a form that is amenable to efficient numerical estimation, making things like sample size estimation much faster. For the Wald test, attempting to use the same expression results in complex expressions that can only be estimated with simulation.}
\end{itemize}

\section{Results and applications}
\subsection{Improving Azure customer experience with AIR}

Since recognizing that interruptions the rate at which they occur is an excellent proxy for measuring customer pain on the platform (and direct customer feedback played a big role in this) about a year ago, Azure has used ``Annual Interruption Rate'', defined as the number of virtual machine interruptions a typical Azure customer will experience if they ran 100-VM-years worth of workloads on the platform as a KPI. This is essentially a failure rate (calculated via equation \ref{rate_def}) with the ``100 VM-years'' chosen simply as a unit for the machine run-time in the denominator, designed to make the scale of the KPI look reasonable.

This has led to prioritization of fixes that drive this number down (which might have been ignored otherwise) and indeed, figure \ref{air_improvement} shows the long way Azure has come far in that regard, with the rate improving from about 70 a year ago to close to single digits today. This has of course contributed to greater customer satisfaction with the platform (particularly for customer workloads that are very sensitive to interruptions like massive online gaming servers), since the rate at which their workloads are interrupted has been trending in the right direction.

The effective technique for comparing AIR between two groups we have discussed in this paper has played a big role in this improvement, as we will see in subsequent subsections.

\subsection{Prioritizing AIR for small slices }
Like any KPI or statistic, we never know what the true interruption rate is but estimate it from a finite sample of data. These estimates carry some variance with them and the larger the time window (in terms of total observed VM-time), the less this variance becomes. When comparing large populations like two hardwares, we have enough of a sample size to simply work with the estimated AIR numbers themselves.

However, when slicing into much smaller buckets (like a single Azure machine or node), this variance becomes quite problematic. As mentioned in the previous sub-section, the rate of interruptions we see on Azure today is about 10 in 100-VM-years (or about 36500 VM-days).

Now, if we're observing a single node for a few (say 10) days running (typically) 10 VMs, this will make for 100 VM-days. So, the number of interruptions we expect to see is: $\frac{10 \times 100}{36500}\sim 0.03$. Of course, there is no such thing as fractional interruptions. In practice, this means we will see no interruptions at all 97 out of 100 times and about one interruption the other 3. This means that those 97 times, we'll observe an AIR of $0$ while the other $3$ times, an AIR of $\frac{1\times 36500}{100}=365$. This makes the metric quite fickle and we can imagine that no ranking we produce based on the raw number will carry any kind of stability.

It is here that the methodology for comparing failure rates as detailed in theorem \ref{rate_test} comes to the rescue. 
Instead of ranking on raw AIR estimates, we can take the number of reboots and observation duration (VM-time) for our small slice (treatment group) and compare with the same numbers for the past day applied to the general fleet (control group). We can then rank on the p-values we get, which tend to be much more stable ranking and help prioritize worthy issues by taking the noise as well as raw AIR estimates into account. 

The simplicity of the rate test has allowed us to code it up in a query language called Kusto (the data processing system of choice in Azure), which makes the barrier to usage across the organization very low and helps drive impact.

\begin{figure}
  \includegraphics[width=0.8\linewidth]{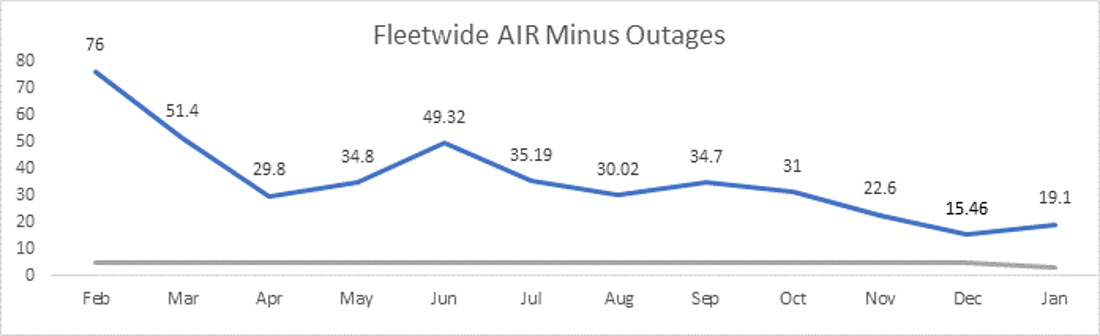}
  \caption{The improvement over time of ``Annual interruption rate'' for the Azure platform as a result of prioritization.}
  \label{air_improvement}
\end{figure}

\subsection{Statistical software testing for the cloud}
Microsoft Azure is an ever changing platform. The hardware running the machines that power it reach the end of their lives and get swapped out, the micro-code running on the chips that power the machines needs to be updated, the various agents running on those machines that help with the VM workflows need to be updated and so on. Most of these changes have the potential to cause regressions in the failure rate and for us to lose some of the ground we've covered over the past year. Since these bits are going to go to a complex cloud environment with diverse hardware configurations customer workloads, settings, etc. and with the customers running on this environment having a very high bar for platform availability and other metrics, the paradigm of traditional software testing needs to be extended along multiple fronts. The solution Azure is coming up with are pre-production testing environments that are being designed specifically to catch regressions that might occur when the payloads hit production. These environments run synthetic workloads, designed to mimic customer workloads on machines sampled from production and compare a control group without the new software change to a treatment group with it (A/B testing).

And since the core KPI we track is failure rate, the test described in theorem \ref{rate_test} holds a very important place in this effort. It is being used in some of these environments to test (for example) micro-code updates on the Intel chips that power Azure, updates to the host agent running on all Azure nodes, etc. where some issues have already been uncovered by it and are being actively investigated.

\subsection{Recommending time to wait}
Another application closely related to catching regressions with hypothesis testing is recommending the size of the test environment and the we should wait to collect data from our two groups before making a go-nogo decision on the new software bits.

Using the closed form expression for the false negative-false positive rate trade off detailed in section II and III, a simulator was created which anyone in Azure that want's to test a new feature can use for the following purposes:

\begin{itemize}
\item{Obtaining a mapping between $\alpha$ (desired FPR) and $\hat{\alpha}$ (type-1 error rate we should set), hence tuning the test to their data. This is per equation \ref{alpha_general}.}
\item{Plotting the profile of $\alpha$ vs $\beta$ to understand what trade off they will get on their data when using this test. This is per equations \ref{beta_rate_1}.}
\item{Given target effect size for failure rate regressions we want to be sensitive to, target false positive and false negative rates, how long should they wait for collecting data before making a go-nogo decision on their feature?}
\end{itemize}

For the time to wait application, using the rate test on $H_0$ and $H_a$ assuming the Poisson distribution provides a lower bound for the time to wait since the time-to-wait from a Poisson assumption will always be lower than that from real data. So, we should wait \textit{atleast} the amount of time a Poisson distributional assumption recommends for reaching certain target false positive and false negative rates for over-dispersed data.

\appendix

\section{Hazard rate}

Let's say that a process produces some events of interest (like motor accidents, machine failures, etc.). The time between successive occurrences of such events (inter-arrival time) is a random variable. Let's call it $T$. The probability density function (PDF) of this random variable is denoted by $f_T(t)$. By definition, the probability that we will see an event between some interval $(t, t+\delta t)$ is given by:

$$P(T \in (t,t+\delta t)) = f_T(t) \delta t$$

A quantity that is more useful in many contexts than the PDF is the hazard rate.

Let's say someone has had successful cancer treatment. The sad thing about cancer is that it is never completely cured and there is always a chance the body will relapse. Given this, a cancer patient might wonder: "it's been 1 year since my treatment and I haven't relapsed yet. Given I didn't relapse until now, what is the chance I'll relapse in the next month?". We can even remove the arbitrary 'month' interval in this statement and simply ask how many events do I expect to see per unit time. This becomes a 'rate' which is similar conceptually to velocity. Just as we have instantaneous velocity, we have instantaneous rate. We can express this notion mathematically as:

$$P(T \in (t,t+\delta t)\:| \: T > t) = \frac{P(T \in (t,t+\delta t) \,  \& \, T>t)}{P(T>t)}$$

$$= \frac{P(T \in (t,t+\delta t))}{P(T>t)}$$
By definition of the probability density function, $f_T(t)$ this becomes:

$$ = \frac{f_T(t) \delta t}{P(T>t)} $$

As $\delta t$ becomes small, the probability that more than one event will occur in that interval becomes negligible. So, there will be either $0$ or $1$ events in this interval when it is sufficiently small, effectively making the event a coin toss (a.k.a. a Bernoulli random variable). So, the probability calculated above is also the expected number of events in the small interval. Then, if the number of events per unit time is defined as the hazard rate function, $h_T(t)$; the number of events in a small interval proceeding $t$ will become: $h_T(t)\delta t$. Equating the two expressions we get:

$$h_T(t)\delta t = \frac{f_T(t)}{P(T>t)} \delta t$$

simplifying,

$$h_T(t) = \frac{f_T(t)}{P(T>t)} = \frac{f_T(t)}{S_T(t)}$$

Now, this rate is a function of time. Which means that in any given large enough interval of time, it will take on different values. What if we wanted to approximate it with a single number (say $\lambda$) for a given interval? By definition of averages, we would have:

$$\int_{t_1}^{t_2} h_T(t) dt = \int_{t_1}^{t_2} \lambda dt = \lambda(t_2-t_1)$$

Also, let's say that the number of events observed in the interval $(t_1,t_2)$ is given by the random variable, $N$. Then the expected value of $N$ is given by (by definition of $h_T$):

$$E(N) = \int_{t_1}^{t_2} h_T(t) dt = \lambda (t_2-t_1)$$

So we get:

\begin{equation}\lambda = \frac{E(N)}{t_2-t_1} \tag{2}\end{equation}

If we observe the process for a certain interval of time and count the number of events, $n$ within said interval then $n$ is an unbiased unbiased estimator for $E(N)$ and so the estimator for $\lambda$ becomes:

$$\hat{\lambda} = \frac{n}{t_2-t_1} = \frac{n}{\bigtriangleup t}$$

\section{Using the exponential distribution to estimate average rate}
Not only is the exponential distribution the simplest possible distribution for modeling the time until some event, it is also the only distribution that has a constant failure rate. This makes it a natural choice for estimating a single failure rate. Here, we will use it to obtain an unbiased estimator for the failure rate when some of our data is censored and some is un-censored. 

First, two quick properties of the exponential distribution. The probability density function is given by:

$$f_T(t) = \lambda e^{-\lambda t}$$

And the survival function is given by:

$$F_T(t) = \int\limits_t^\infty f_T(t) dt = e^{-\lambda t}$$

Using these, the likelihood function of data with $t_i$ un-censored data and $x_j$ censored data points becomes:

$$L(\lambda) = \prod\limits_{i=1}^n \lambda e^{-\lambda t_i} \prod\limits_{j=1}^m e^{-\lambda x_j}$$

Taking logs on both sides, we get the log likelihood function.

$$ll(\lambda) = \sum\limits_{i=1}^n (\log \lambda -\lambda t_i) - \sum\limits_{j=1}^m \lambda x_j$$

To get the value of the parameter $\lambda$ that minimizes the log-likelihood function, we take derivative with respect to $\lambda$ and set it to zero.

$$\frac{\partial ll(\lambda)}{\partial \lambda} = \frac{n}{\lambda} - \left(\sum\limits_i t_i + \sum\limits_j x_j\right)$$

$$\frac{\partial ll(\lambda)}{\partial \lambda} = 0$$

This gives us the MTTF (Mean time to failure):
$$\frac 1 \lambda = \frac{\sum t_i +\sum x_j}{n}$$

Hence, we can simply sum all the UP times (censored and uncensored) and divide by the number of downtime events to get the MTTF. A similar result holds for MTTR, replacing UP times with DOWN times. 

The AIR estimate then becomes: $\frac{1}{\text{MTTF}}$, which is just the total number of failures divided by the total UP time.

\section*{Acknowledgements}
This work exists solely because of people who were kind enough to discuss with us the applicability of the rate test to various kinds of point processes and the nuances therein. If these discussions had not happened, we would have simply applied the test, accepting that it is ``most powerful'' and not done all these experiments which led to a lot of new insight. Hence, we would like to thank Paul Li, Duncan Wardsworth, Yilan Zhang, Mary Hu and Oscar Zarate and Ze Li for being kind enough to take time to discuss and intelligently critique the methodology. 

In addition, Randolph Yao, Yifan Chang and Saurabh Agarwal provided insights that helped improve the paper.

We would also like to thank Naveen Gundavaram, Avinash Kumar and Arindam Basik as well as Meir Schmoley, Rahul Shah, Jayjit Phadke and Konstantine Maleshenko for their partnership in various test environment initiatives across Azure.


\end{document}